\begin{document}

\theoremstyle{plain}
\newtheorem{theorem}{Theorem}
\newtheorem{lemma}[theorem]{Lemma}
\newtheorem{corollary}[theorem]{Corollary}
\newtheorem{conjecture}[theorem]{Conjecture}
\newtheorem{proposition}[theorem]{Proposition}
\newcommand{\PT}{\mathrm{PTL}}
 \newcommand{\C}{\mathbb{C}}
  \newcommand{\F}{\mathbb{F}}
  \newcommand{\N}{\mathbb{N}}
  \renewcommand{\P}{\mathbb{P}}
  \newcommand{\R}{\mathbb{R}}
  \newcommand{\Z}{\mathbf{Z}}
  \renewcommand{\a}{\mathbf{a}}
  \renewcommand{\b}{\mathbf{b}}
  \renewcommand{\i}{\mathbf{i}}
  \renewcommand{\j}{\mathbf{j}}
  \renewcommand{\c}{\mathbf{c}}
  \newcommand{\e}{\mathbf{e}}
  \newcommand{\f}{\mathbf{f}}
  \newcommand{\g}{\mathbf{g}}
  \newcommand{\gl}{\mathbf{GL}}
  \newcommand{\m}{\mathbf{m}}
  \newcommand{\n}{\mathbf{n}}
  \newcommand{\bNP}{\mathbf{NP}}
  \newcommand{\bNPC}{\mathbf{NPC}}
  \newcommand{\p}{\mathbf{p}}
  \newcommand{\bP}{\mathbb{P}}
  \newcommand{\bPo}{\mathbf{Po}}
  \newcommand{\q}{\mathbf{q}}
  \newcommand{\s}{\mathbf{s}}
  \newcommand{\bt}{\mathbf{t}}
  \newcommand{\T}{\mathbf{T}}
  \newcommand{\U}{\mathbf{U}}
  \renewcommand{\u}{\mathbf{u}}
  \renewcommand{\v}{\mathbf{v}}
  \newcommand{\V}{\mathbf{V}}
  \newcommand{\w}{\mathbf{w}}
  \newcommand{\W}{\mathbf{W}}
  \newcommand{\x}{\mathbf{x}}
  \newcommand{\X}{\mathbf{X}}
  \newcommand{\y}{\mathbf{y}}
  \newcommand{\Y}{\mathbf{Y}}
  \newcommand{\z}{\mathbf{z}}
  \newcommand{\0}{\mathbf{0}}
  \newcommand{\1}{\mathbf{1}}
  \newcommand{\Gam}{\mathbf{\Gamma}}
  \newcommand{\bGamma}{\Gam}
  \newcommand{\Lam}{\mathbf{\Lambda}}
  \newcommand{\lam}{\mbox{\boldmath{$\lambda$}}}
  \newcommand{\bA}{\mathbf{A}}
  \newcommand{\bB}{\mathbf{B}}
  \newcommand{\bC}{\mathbf{C}}
  \newcommand{\bH}{\mathbf{H}}
  \newcommand{\bL}{\mathbf{L}}
  \newcommand{\bM}{\mathbf{M}}
  \newcommand{\bc}{\mathbf{c}}
  \newcommand{\cA}{\mathcal{A}}
  \newcommand{\cB}{\mathcal{B}}
  \newcommand{\cC}{\mathcal{C}}
  \newcommand{\cD}{\mathcal{D}}
  \newcommand{\cE}{\mathcal{E}}
  \newcommand{\cF}{\mathcal{F}}
  \newcommand{\cG}{\mathcal{G}}
  \newcommand{\cH}{\mathcal{H}}
  \newcommand{\cI}{\mathcal{I}}
  \newcommand{\cL}{\mathcal{L}}
  \newcommand{\cM}{\mathcal{M}}
  \newcommand{\cO}{\mathcal{O}}
  \newcommand{\cP}{\mathcal{P}}
  \newcommand{\cR}{\mathcal{R}}
  \newcommand{\cS}{\mathcal{S}}
  \newcommand{\cT}{\mathcal{T}}
  \newcommand{\cU}{\mathcal{U}}
  \newcommand{\cV}{\mathcal{V}}
  \newcommand{\cW}{\mathcal{W}}
  \newcommand{\cX}{\mathcal{X}}
  \newcommand{\cY}{\mathcal{Y}}
  \newcommand{\cZ}{\mathcal{Z}}
  \newcommand{\rE}{\mathrm{E}}
  \newcommand{\rH}{\mathrm{H}}
  \newcommand{\rU}{\mathrm{U}}
  \newcommand{\Cp}{\mathrm{Cap\;}}
  \newcommand{\lan}{\langle}
  \newcommand{\ran}{\rangle}
  \newcommand{\an}[1]{\lan#1\ran}
  \def\diag{\mathop{{\rm diag}}\nolimits}
  \newcommand{\hs}{\hspace*{\parindent}}
  \newcommand{\cl}{\mathop{\mathrm{Cl}}\nolimits}
  \newcommand{\tr}{\mathop{\mathrm{Tr}}\nolimits}
  \newcommand{\Aut}{\mathop{\mathrm{Aut}}\nolimits}
  \newcommand{\argmax}{\mathop{\mathrm{arg\,max}}}
  \newcommand{\Eig}{\mathop{\mathrm{Eig}}\nolimits}
  \newcommand{\Gr}{\mathop{\mathrm{Gr}}\nolimits}
  \newcommand{\Fr}{\mathop{\mathrm{Fr}}\nolimits}
  \newcommand{\trans}{^\top}
  \newcommand{\opt}{\mathop{\mathrm{opt}}\nolimits}
  \newcommand{\per}{\mathop{\mathrm{perm}}\nolimits}
  \newcommand{\haff}{\mathrm{haf\;}}
  \newcommand{\perio}{\mathrm{per}}
  \newcommand{\conv}{\mathrm{conv\;}}
  \newcommand{\Cov}{\mathrm{Cov}}
  \newcommand{\inter}{\mathrm{int}}
  \newcommand{\dist}{\mathrm{dist}}
  \newcommand{\inn}{\mathrm{in}}
  \newcommand{\grank}{\mathrm{grank}}
  \newcommand{\mrank}{\mathrm{mrank}}
  \newcommand{\krank}{\mathrm{krank}}
  \newcommand{\out}{\mathrm{out}}
  \newcommand{\orient}{\mathrm{orient}}
  \newcommand{\Pu}{\mathrm{Pu}}
  \newcommand{\rdc}{\mathrm{rdc}}
  \newcommand{\range}{\mathrm{range\;}}
  \newcommand{\Sing}{\mathrm{Sing\;}}
  \newcommand{\topo}{\mathrm{top}}
  \newcommand{\undir}{\mathrm{undir}}
  \newcommand{\Var}{\mathrm{Var}}
  \newcommand{\rC}{\mathrm{C}}
  \newcommand{\rF}{\mathrm{F}}
  \newcommand{\rL}{\mathrm{L}}
  \newcommand{\rM}{\mathrm{M}}
  \newcommand{\rO}{\mathrm{O}}
  \newcommand{\rR}{\mathrm{R}}
  \newcommand{\rS}{\mathrm{S}}
  \newcommand{\rT}{\mathrm{T}}
  \newcommand{\pr}{\mathrm{pr}}
  \newcommand{\inte}{\mathrm{int}}
  \newcommand{\inv}{\mathrm{inv}}
  \newcommand{\pers}{\per_s}
  \newcommand{\del}{\boldsymbol{\delta}}
  \renewcommand{\alph}{\boldsymbol{\alpha}}
  \newcommand{\bet}{\boldsymbol{\beta}}
  \newcommand{\gam}{\boldsymbol{\gamma}}
  \newcommand{\sig}{\boldsymbol{\sigma}}
  \newcommand{\zet}{\boldsymbol{\zeta}}
  \newcommand{\et}{\boldsymbol{\eta}}
  \newcommand{\xit}{\boldsymbol{\xi}}
  \newcommand{\perm}{\mathrm{perm\;}}
  \newcommand{\adj}{\mathrm{adj\;}}
  \newcommand{\rank}{\mathrm{rank\;}}
  \newcommand{\set}[1]{\{#1\}}
  \newcommand{\spec}{\mathrm{spec\;}}
  \newcommand{\supp}{\mathrm{supp\;}}
  \newcommand{\Tr}{\mathrm{Tr\;}}
  \newcommand{\vol}{\text{vol}}

\theoremstyle{definition}
\newtheorem{definition}{Definition}

\theoremstyle{remark}
\newtheorem*{remark}{Remark}
\newtheorem{example}{Example}

\title{Closed formula for the relative entropy of entanglement in all dimensions}
\author{Shmuel Friedland}\email{friedlan@uic.edu}
\affiliation{Department of Mathematics, Statistics and Computer Science
University of Illinois at Chicago,
851 S. Morgan Street,
Chicago, IL 60607-7045}
\author{Gilad Gour}\email{gour@math.ucalgary.ca}
\affiliation{Institute for Quantum Information Science and
Department of Mathematics and Statistics,
University of Calgary, 2500 University Drive NW,
Calgary, Alberta, Canada T2N 1N4}

\date{1 October, 2010}

\begin{abstract}
The relative entropy of entanglement is defined in terms of the relative entropy between an entangled state and its
closest separable state (CSS).  Given a multipartite-state on the boundary of the set of separable states, we find a closed formula for \emph{all} the entangled states for which this state is a CSS. Our formula holds for multipartite states in all dimensions.
For the bipartite case of two qubits our formula reduce to the one given in Phys. Rev. A \textbf{78}, 032310 (2008).
\end{abstract}

\pacs{03.67.Mn, 03.67.Hk, 03.65.Ud}

\maketitle

\section{Introduction}

Immediately with the emergence of quantum information science (QIS), entanglement was recognized as the key resource for many tasks such as teleportation, super dense coding
and more recently measurement based quantum computation (for review, see e.g.~\cite{Hor09,Ple07}). This recognition sparked an enormous stream of work in an effort to quantify entanglement in both bipartite and multipartite settings.
Despite the huge effort, except the negativity~\cite{Vid02} (and the logarithmic negativity~\cite{Ple05}) closed formulas for the calculation of different measures of entanglement exist only in two qubits systems and,
to our knowledge, only for the entanglement of formation~\cite{Woo98}. Moreover, the discovery that several measures of entanglement and quantum channel capacities are not additive~\cite{Has09,Yard08}, made it clear that
formulas in lower dimensional systems, in general, can not be used to determine the asymptotic rates of different quantum information tasks. Hence, formulas in higher dimensional systems are quite essential for the development of QIS.

Among the different measures of entanglement, the relative entropy of entanglement (REE)  is of a particular importance. The REE
is defined by~\cite{Ved98}:
\begin{equation}\label{def}
E_{R}(\rho)=\min_{\sigma'\in\mathcal{D}}S(\rho\|\sigma')=S(\rho\|\sigma)\;,
\end{equation}
where $\mathcal{D}$ is the set of separable states or positive partial transpose (PPT) states, and
$S(\rho\|\sigma)\equiv\text{Tr}\left(\rho\log\rho-\rho\log\sigma\right)$. It
quantifies to what extent a given state can be operationally distinguished from the closest state
which is either separable or has a positive partial transpose (PPT). Besides of being an entanglement monotone it also has nice properties such as being asymptotically continuous. The importance of the REE comes from the fact that
its asymptotic version provides the unique rate for reversible transformations~\cite{Hor02}. This property was demonstrated recently with the discovery that the regularized REE is the unique function that quantify the rate of interconversion between states in a reversible theory of entanglement, where all types of non-entangling operations are allowed~\cite{Bra08}.

The state $\sigma=\sigma(\rho)$ in Eq.~(\ref{def}) is called the closest separable state (CSS) or the closest PPT state.
Recently, the inverse problem to the long standing problem~\cite{Eis05} of finding the formula for the CSS $\sigma(\rho)$ was solved in~\cite{MI} for the
case of two qubits. In~\cite{MI} the authors found a closed formula for the inverse problem. That is, for a given state $0<\sigma$
on the boundary of 2-qubits separable states, $\partial\mathcal{D}$, the authors found an explicit formula describing all entangled states for which $\sigma$ is the CSS. Quite astonishingly, we show here that this inverse problem can be solved analytically
not only for the case of two qubits, but in fact in all dimensions and for any number of parties.

We now describe briefly this formula. Denote by $\rH_n$ the Hilbert space of $n\times n$ hermitian
 matrices, where the inner product of $X,Y\in\rH_n$ is given by $\tr XY$.  Denote by $\rH_{n,+,1}\subset\rH_{n,+}\subset \rH_n$
 the convex set of positive hermitian matrices of trace one, and the cone of positive hermitian matrices, respectively.
 Here $n=n_1n_2\cdots n_s$ so that the multi-partite density matrix $\rho\in\rH_{n,+,1}$ can be viewed as acting on the $s$-parties Hilbert space $\mathbb{C}^{n_1}\otimes\mathbb{C}^{n_2}\otimes\cdots\otimes\mathbb{C}^{n_s}$.

Let $0<\sigma\in\rH_{n,+}$ (i.e. $\sigma$ is full rank).  Then for any $\sigma'\in\rH_n$ and
 a small real $\varepsilon$ we have the Taylor expansion of
 $$\log(\sigma+\varepsilon \sigma')=\log\sigma+\varepsilon L_{\sigma}(\sigma')+O(\varepsilon^2).$$
 Here $L_{\sigma}:\rH_n\to\rH_n$ is a self-adjoint operator (defined in the next section), which is invertible, and satisfies $L_{\sigma}(\sigma)=I$.

 Assume now that $0<\sigma\in\partial\mathcal{D}$ (later we will extend the results for all $\sigma\in\partial\mathcal{D}$; i.e. not necessarily full rank).  Then, from the supporting hyperplane theorem, $\sigma$ has at
 least one supporting hyperplane, $\phi\in\rH_n$, of the following form:
 \begin{equation}\label{phisuphy}
 \tr(\phi\sigma')\ge \tr(\phi\sigma)=0 \;\;\;\forall\;\;\; \sigma'\in \mathcal{D}\;,
 \end{equation}
 where $\phi$ is normalized; i.e. $\tr\phi^2=1$.
 For each such $\phi$, we define the family of \emph{all} entangled states, $\rho(x,\sigma)$, for which $\sigma$ is the CSS:
 \begin{equation}\label{rhoformul}
 \rho(x,\sigma)=\sigma - x L_{\sigma}^{-1}(\phi),\quad 0<x\le x_{\max}.
 \end{equation}
 Here, $x_{\max}$ is defined such that $\rho(x_{\max},\sigma)\in\rH_{n,+,1}$ and $\rho(x_{\max},\sigma)$
 has at least one zero eigenvalue. We also note that $\tr L_{\sigma}^{-1}(\phi)=0$. Moreover, for the case of two qubits,
 $\phi$ is unique and is given by $\phi=\left(|\varphi\rangle\langle\varphi|\right)^{\Gamma}$, where $\Gamma$ is the partial transpose, and $|\varphi\rangle$ is the unique normalized state that satisfies $\sigma^{\Gamma}|\varphi\rangle=0$.
 Hence, for the case of two qubits our formula is reduced to the one given in~\cite{MI}, by recognizing that for this case, the self-adjoint operator $L_{\sigma}^{-1}$ is given by the function $G(\sigma)$ of Ref.~\cite{MI}.

 This paper is organized as follows. In the next section we discuss the definition of $L_{\sigma}$.
 In section III we find necessary and sufficient conditions for the CSS and in section IV we prove the main result for the case were the CSS is full rank. To illustrate how the formula can be applied, in section V we discuss the qubit-qudit $2\times m$ case. In section VI we discuss how to apply the formula for tensor products. In section VII we discuss the singular case, and show that the CSS state can be described in a similar way to the non-singular case.  We end in section VIII with conclusions.

 \section{Definition of $L_{\sigma}$}

 Let $0<\alpha\in\rH_{n,+}$.  Fix $\beta\in \rH_n$.  Let $t\in (-\varepsilon,\varepsilon)$ for some small
 $\varepsilon=\varepsilon(\alpha)>0$.  Rellich's theorem, , e.g. \cite{Kat80}, yields that $\log(\alpha+t\beta)$ is analytic
 for $t\in (-\varepsilon,\varepsilon)$.  So
 \begin{equation}\label{analytic}
 \log(\alpha+t\beta)=\log \alpha+t\rL_{\alpha}(\beta)+O(t^2).
 \end{equation}
 Here $\rL_{\alpha}:\rH_n\to \rH_n$ is the following linear operator.
 In the eigenbasis of $\alpha$, $\alpha={\rm diag}(a_1,\ldots,a_n)$ is a diagonal matrix,
 where $a_1,\ldots,a_n>0$.  Then for $\beta=[b_{ij}]_{i,j=1}^n$ we have that
 \begin{equation}\label{frstvarlog}
 [\rL_{\alpha}(\beta)]_{kl} = b_{kl}\frac{\log a_k-\log a_{l}}{a_k-a_l}, \quad k,l=1,\ldots,n.
 \end{equation}
 Here we assume that for a positive $a$, $\frac{\log a-\log a}{a-a}=\frac{1}{a}, \frac{a-a}{\log a -\log a}=a$.

 Equivalently, for a real diagonal matrix $\alpha=(a_1,\ldots,a_n)>0$ define the real symmetric matrix
 \begin{align}
 &\left[T(\alpha)\right]_{k,l=1}^n=\frac{\log a_k-\log a_{l}}{a_k-a_l}\nonumber\\
 &\left[S(\alpha)\right]_{k,l=1}^n= \frac{a_k-a_l}{\log a_k-\log a_{l}}.\label{defTalpha}
 \end{align}
 Then, $L_{\alpha}(\beta)=\beta\circ T(\alpha)$, where
 $\beta\circ \eta$ is the entrywise product of two matrices, sometimes called the Hadamard product of matrices.
 Note that $L_{\alpha}$ is an invertible operator, where $L_{\alpha}^{-1}(\beta)=\beta\circ S(\alpha)$.

\section{A necessary and sufficient condition for $\sigma(\rho)$}

\begin{figure}[tp]
\includegraphics[scale=.40]{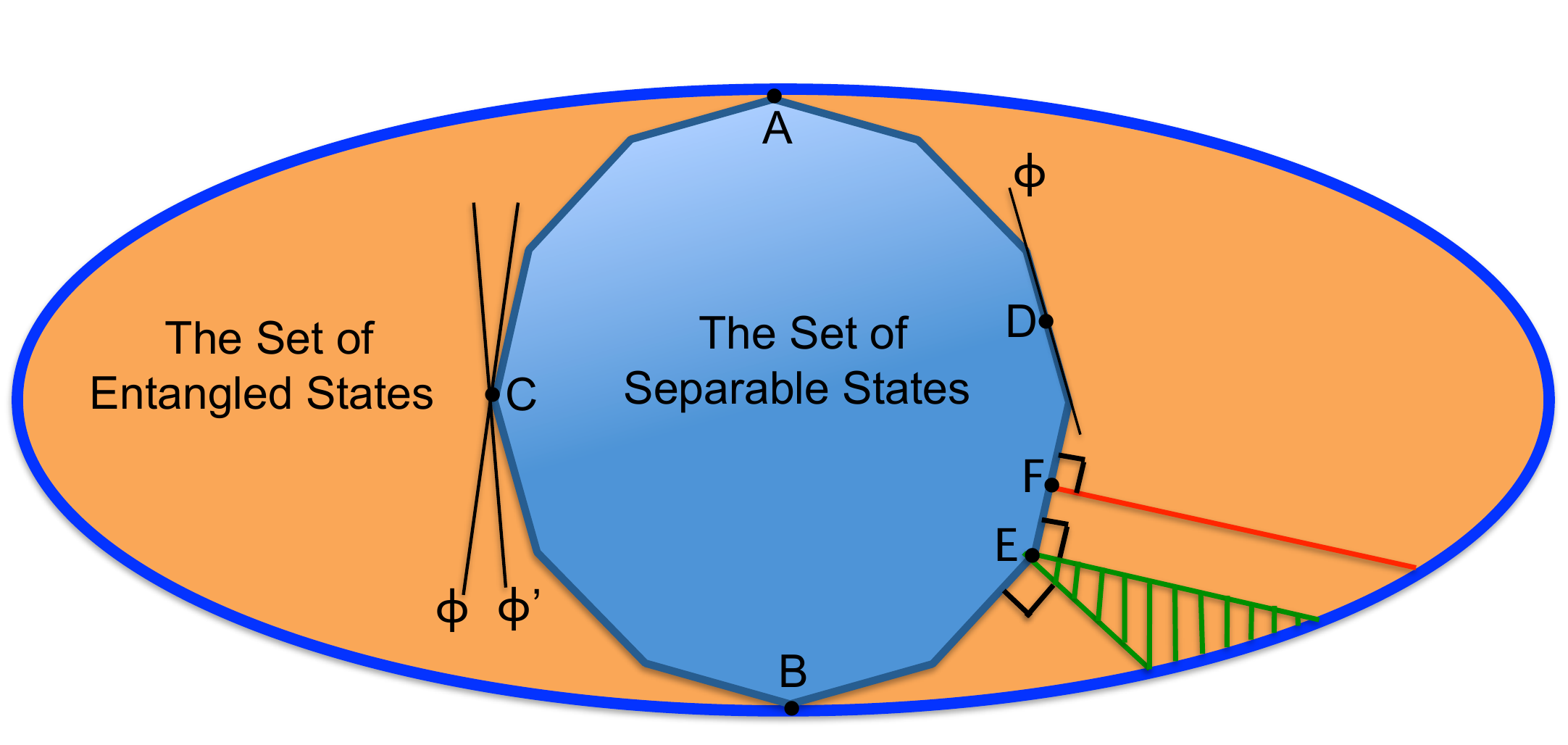}
\caption{A schematic diagram of separable states (blue) and entangled states (orange). Most points on the boundary, like the points D and F, have a unique supporting hyperplane (which is also the tangent plane). The point F is the CSS of all the points on the red line. Some of the points, like the points C and E, have more than one
supporting hyperplane. The point E is the CCS of all the points in the shaded green area. Some points on the boundary, like the points A and B, can not be a CSS; for example, separable states of rank 1 (i.e. product states) are on the boundary of separable states, but can never be the CSS of some entangled state.}
\end{figure}

We start with a necessary and sufficient condition the CSS, $\sigma(\rho)$, must satisfy.
\begin{theorem}\label{ness}
 Let $0<\rho\in\rH_{n,+,1}\backslash\mathcal{D}$.  The state $0<\sigma(\rho)\in\mathcal{D}$ is a solution to Eq.~(\ref{def}),
 if and only if $\sigma\equiv\sigma(\rho)$ satisfies
 \begin{equation}\label{necmaxcon}
 \max_{\sigma'\in \mathcal{D}} \tr  \sigma'L_{\sigma}(\rho)=\tr  \sigma L_{\sigma}(\rho)=1.
 \end{equation}
 \end{theorem}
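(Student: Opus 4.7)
The plan is to characterize the closest separable state via the first-order optimality condition for the convex minimization problem $\min_{\sigma'\in\mathcal{D}}S(\rho\|\sigma')$. Since $S(\rho\|\sigma')=\tr(\rho\log\rho)-\tr(\rho\log\sigma')$ and the first term is constant in $\sigma'$, I would work with $f(\sigma')\equiv -\tr(\rho\log\sigma')$, which is a convex function of $\sigma'$ on the positive cone $\rH_{n,+}$ by operator concavity of the logarithm. Since $\mathcal{D}$ is convex, $\sigma$ is a minimizer over $\mathcal{D}$ if and only if every directional derivative from $\sigma$ toward any $\sigma'\in\mathcal{D}$ is nonnegative.

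First I would compute this directional derivative. Because $\sigma>0$, the Taylor expansion \eqref{analytic} applies in a full neighborhood of $\sigma$, so for any $\sigma'\in\mathcal{D}$ and $t$ small,
\begin{equation*}
f(\sigma+t(\sigma'-\sigma))=f(\sigma)-t\,\tr\bigl(\rho\,L_\sigma(\sigma'-\sigma)\bigr)+O(t^2).
\end{equation*}
Using self-adjointness of $L_\sigma$ (with respect to the Hilbert–Schmidt inner product, which follows from the explicit symmetric form \eqref{frstvarlog}), the linear coefficient rewrites as $-\tr\bigl(L_\sigma(\rho)(\sigma'-\sigma)\bigr)$. Hence, by convexity, $\sigma$ is the CSS iff
\begin{equation*}
\tr\bigl(L_\sigma(\rho)(\sigma'-\sigma)\bigr)\le 0 \quad \text{for all } \sigma'\in\mathcal{D},
\end{equation*}
equivalently $\tr\bigl(\sigma\, L_\sigma(\rho)\bigr)=\max_{\sigma'\in\mathcal{D}}\tr\bigl(\sigma'L_\sigma(\rho)\bigr)$.

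Second, I would identify the constant as $1$ using the identity $L_\sigma(\sigma)=I$ from the previous section. Self-adjointness again gives
\begin{equation*}
\tr\bigl(\sigma\,L_\sigma(\rho)\bigr)=\tr\bigl(L_\sigma(\sigma)\,\rho\bigr)=\tr(I\rho)=\tr\rho=1.
\end{equation*}
Combining the two yields \eqref{necmaxcon}, and running the argument in reverse (convexity plus nonnegative directional derivatives implies global minimality on a convex set) gives the converse.

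The only nonroutine step is the justification of the first-order characterization, which has two subtleties: (i) one must be sure the Taylor expansion is valid in a full neighborhood, which is exactly why the hypothesis $\sigma>0$ is used and why the singular case needs separate treatment in Section~VII; and (ii) one must argue that on the convex set $\mathcal{D}$ with $\sigma$ in its interior along every feasible direction into $\mathcal{D}$, the variational inequality is both necessary and sufficient for global optimality—a standard fact for convex functions that I would invoke rather than reprove. Operator concavity of the logarithm, ensuring convexity of $f$, is likewise a classical fact (Lieb) that I would cite rather than derive.
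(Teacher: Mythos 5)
Your proof is correct and follows essentially the same route as the paper: the necessity direction is the identical directional-derivative computation using the Taylor expansion \eqref{analytic}, the self-adjointness of $L_{\sigma}$, and $L_{\sigma}(\sigma)=I$ to identify the constant as $1$. The only organizational difference is that you close the sufficiency direction in place via the standard first-order characterization of minimizers of a convex function on a convex set, whereas the paper defers that direction to the concavity argument in the proof of Theorem~\ref{main}; the underlying ingredient (convexity of $\sigma'\mapsto-\tr(\rho\log\sigma')$) is the same in both.
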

 \begin{remark}
 We will see later that the assumptions that $0<\rho$ and $0<\sigma$ are not necessary.
 \end{remark}
 \begin{proof}
 First, note that $L_{\sigma}(\sigma)=I_n$, and $L_\sigma$ is a self-adjoint operator. Hence,
 $ \tr  \sigma L_{\sigma}(\rho)=\tr L_{\sigma}(\sigma)\rho=\tr (\rho)=1$.
 Now, let $\sigma'\in \mathcal{D}$.  Since $\mathcal{D}$ is a convex set, it follows that for every $t\in [0,1]$, $(1-t)\sigma+t\sigma'=\sigma+
 t(\sigma'-\sigma)\in\mathcal{D}$.  Thus, applying Rellich's theorem for a small $t>0$ gives
 \begin{equation}\label{nes1}
 \log(\sigma+t(\sigma'-\sigma))=\log\sigma+tL_{\sigma}
 (\sigma'-\sigma)+O(t^2).
 \end{equation}
If $\sigma$ is a solution to Eq.(\ref{def}), we must have $\tr \rho\log\sigma\ge \tr\rho\log\left[\sigma+t(\sigma'-\sigma)\right]$ which together with Eq.(\ref{nes1}) implies that for a small positive $t$,
 $t\tr \rho L_{\sigma}(\sigma'-\sigma)\le 0$.  Dividing by $t$ gives
 $\tr \rho L_{\sigma}(\sigma')\le \tr \rho L_{\sigma}(\sigma)=1$. This completes the necessary part of the proof since $L_\sigma$ is self-adjoint. The sufficient part of the proof follows directly from the construction of $\rho(x,\sigma)$ in Theorem~\ref{main}.
 \end{proof}

 The proposition above leads to the following intuitive corollary:
 \begin{corollary}
 Let $0<\sigma$ be a CSS of an entangled state $\rho$. Then, $\sigma\in\partial\mathcal{D}$.
  \end{corollary}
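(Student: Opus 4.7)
The plan is to argue by contradiction. Assume $\sigma$ is a CSS of the entangled state $\rho$ but lies in the relative interior of $\mathcal{D}$ (relative to the affine hyperplane of trace-one Hermitians). Then any small trace-preserving perturbation of $\sigma$ stays inside $\mathcal{D}$, and I will use the variational characterization from Theorem~\ref{ness} to obtain a strong pointwise identity for $L_\sigma(\rho)$.

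Concretely, Theorem~\ref{ness} gives $\tr\sigma' L_\sigma(\rho)\le 1 =\tr\sigma L_\sigma(\rho)$ for every $\sigma'\in\mathcal{D}$. If $\sigma$ is interior, then for every Hermitian $\delta$ with $\tr\delta=0$ and sufficiently small norm, both $\sigma+\delta$ and $\sigma-\delta$ belong to $\mathcal{D}$. Applying the inequality to both choices forces $\tr\delta\,L_\sigma(\rho)=0$ for every trace-zero Hermitian $\delta$, and therefore $L_\sigma(\rho)=\lambda I_n$ for some scalar $\lambda$. Pairing with $\sigma$ and using $L_\sigma(\sigma)=I_n$ (so that $\tr\sigma L_\sigma(\rho)=\tr L_\sigma(\sigma)\rho=\tr\rho=1$) pins down $\lambda=1$, hence $L_\sigma(\rho)=I_n=L_\sigma(\sigma)$.

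Since $L_\sigma$ is invertible (as recalled in Section~II, via $L_\sigma^{-1}(\beta)=\beta\circ S(\sigma)$), this equality forces $\rho=\sigma$. But $\rho$ is entangled and $\sigma\in\mathcal{D}$, a contradiction. Hence $\sigma$ cannot lie in the interior of $\mathcal{D}$, and must belong to $\partial\mathcal{D}$.

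The only delicate point is the notion of ``interior'': $\mathcal{D}$ has empty interior as a subset of $\rH_n$, so ``interior'' has to be understood relative to the affine hyperplane $\tr X=1$, where $\mathcal{D}$ does have nonempty interior (for example around the maximally mixed state). That is exactly the setting in which the perturbation argument above produces enough trace-preserving directions to conclude $L_\sigma(\rho)\propto I_n$, so no additional obstacle arises.
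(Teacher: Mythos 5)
Your proposal is correct and follows essentially the same route as the paper: assume $\sigma$ is a (relative) interior point, use the two-sided perturbation freedom to upgrade the inequality of Theorem~\ref{ness} to the identity $L_\sigma(\rho)=I_n$, and then invoke invertibility of $L_\sigma$ together with $L_\sigma(\sigma)=I_n$ to force $\rho=\sigma$, contradicting entanglement of $\rho$. Your explicit handling of the relative-interior subtlety is a welcome clarification of what the paper leaves implicit.
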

  \begin{proof}
Assume that the CSS $\sigma$ is an
interior point of $\mathcal{D}$.  Thus, for each $\sigma'$ separable
$(1-t)\sigma+t\sigma'$ is separable for all small $|t|$, where $t$ is either positive or
negative.
Hence, instead of Eq.(\ref{necmaxcon}) we get the identity
$\tr  L_{\sigma}(\rho)\sigma'=1$
for \emph{all} separable states $\sigma'\in\mathcal{D}$.  This yields that $L_{\sigma}(\rho)=I_n$.
Hence $\rho=\sigma$ which is impossible since $\rho$ was assumed not to be
separable.
  \end{proof}

 \section{Main Theorem}

 In the following we prove the main theorem for the case where the CSS, $\sigma$, is full rank.
 Note that if $\rho$ is full rank then $\sigma$ also must be full rank.

 \begin{theorem}\label{main}
\textbf{(a)} Let $0<\sigma\in\partial\mathcal{D}$, and let $\rho\in\rH_{n,+,1}$. Then,
 $$
 E_{R}(\rho)=S(\rho\|\sigma)\;\;\;\left(\text{i.e.}\;\sigma\;\text{is the CSS of}\;\rho\right)
 $$
 if and only if $\rho=\rho(x,\sigma)$, where $\rho(x,\sigma)$ is defined in Eq.(\ref{rhoformul}).\\
 \textbf{(b)} If $\rho>0$ (i.e. full rank) than the CSS is unique.
 \end{theorem}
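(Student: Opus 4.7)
The plan is to use Theorem~\ref{ness} as the bridge between CSS optimality and the explicit form (\ref{rhoformul}). The key observation is that, because $\tr\sigma'=1$ for every $\sigma'\in\mathcal{D}$, the max condition (\ref{necmaxcon}) rewrites as
\begin{equation*}
\tr\sigma'\bigl(I-L_\sigma(\rho)\bigr)\ge 0 = \tr\sigma\bigl(I-L_\sigma(\rho)\bigr)\quad\forall\,\sigma'\in\mathcal{D},
\end{equation*}
so that $\tilde\phi:=I-L_\sigma(\rho)$ is precisely a (positively scaled) supporting hyperplane of $\mathcal{D}$ at $\sigma$ of the form (\ref{phisuphy}). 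This dictionary drives both directions of part (a).

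For sufficiency, I start from $\rho=\sigma-xL_\sigma^{-1}(\phi)$ with $0<x\le x_{\max}$. Linearity of $L_\sigma$ and $L_\sigma(\sigma)=I$ give $L_\sigma(\rho)=I-x\phi$, so $\tr\sigma'L_\sigma(\rho)=1-x\tr\sigma'\phi\le 1$ for all $\sigma'\in\mathcal{D}$, with equality at $\sigma'=\sigma$ by (\ref{phisuphy}). Convexity of $\sigma'\mapsto S(\rho\|\sigma')$ then makes this first-order condition sufficient for $\sigma$ to be a global minimizer over $\mathcal{D}$, so $\sigma$ is the CSS; admissibility of $\rho$ as a density matrix is built into the definition of $x_{\max}$. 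For necessity I reverse the dictionary: assuming $\sigma$ is the CSS, set $\tilde\phi=I-L_\sigma(\rho)$, $x=\sqrt{\tr\tilde\phi^2}$, and $\phi=\tilde\phi/x$. Theorem~\ref{ness} guarantees that $\phi$ is a normalized supporting hyperplane at $\sigma$, and inverting $L_\sigma$ (using $L_\sigma(\sigma)=I$) yields $\rho=\sigma-xL_\sigma^{-1}(\phi)$. The strict inequality $x>0$ holds precisely when $\rho\ne\sigma$, i.e.\ in the entangled case of interest, and $x\le x_{\max}$ follows from $\rho\in\rH_{n,+,1}$ combined with the definition of $x_{\max}$.

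Part (b) is where the real work lies. When $\rho>0$, finiteness of $S(\rho\|\sigma)$ forces $\supp\rho\subseteq\supp\sigma$, so any CSS is itself full rank. If $\sigma_1,\sigma_2$ are two CSS, convexity of $S(\rho\|\cdot)$ and of $\mathcal{D}$ makes every $\sigma_t:=(1-t)\sigma_1+t\sigma_2$ a minimizer as well, so $t\mapsto S(\rho\|\sigma_t)$ is constant on $[0,1]$. The main obstacle of the whole proof is to show that this forces $\sigma_1=\sigma_2$, which amounts to strict convexity of the relative entropy in the second argument on the positive cone. The plan is to compute the Hessian directly: using the integral representation $\log\sigma=\int_0^\infty\bigl[(s+1)^{-1}I-(sI+\sigma)^{-1}\bigr]ds$, one finds
\begin{equation*}
\frac{d^2}{dt^2}S(\rho\|\sigma_t) = 2\int_0^\infty\tr\bigl[\rho\,B_s\Delta B_s\Delta B_s\bigr]\,ds,
\end{equation*}
with $\Delta=\sigma_2-\sigma_1$ and $B_s=(sI+\sigma_t)^{-1}>0$. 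The substitution $X_s=B_s^{1/2}\Delta B_s^{1/2}$ recasts the integrand as $\tr\bigl[(B_s^{1/2}\rho B_s^{1/2})X_s^2\bigr]$, which is manifestly nonnegative and, since $B_s^{1/2}\rho B_s^{1/2}>0$ whenever $\rho>0$, vanishes only when $X_s=0$, i.e.\ $\Delta=0$. Constancy of $t\mapsto S(\rho\|\sigma_t)$ then forces $\Delta=0$, so $\sigma_1=\sigma_2$.
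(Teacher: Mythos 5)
Your proof is correct. Part (a) follows essentially the same route as the paper: you translate the first-order condition of Theorem~\ref{ness} into the supporting-hyperplane condition (\ref{phisuphy}) via $\tilde\phi=I-L_\sigma(\rho)$ (the paper writes $\phi'=-(L_{\sigma}(\rho)-I)$ and normalizes it to $x\phi$), and you obtain sufficiency from the same concavity-of-$\tr(\rho\log\sigma')$ first-order argument. One caution: do not lean on the ``sufficient part'' of Theorem~\ref{ness} itself, since the paper defers that direction to Theorem~\ref{main} and citing it here would be circular --- but you in fact re-derive sufficiency directly, so your logic is sound. Part (b) is where you genuinely diverge. The paper proves uniqueness through a ``strong concavity'' property of the matrix logarithm (Appendix~\ref{strong}): it introduces $R(A,B)=A+B-A^{1/2}B^{1/2}-B^{1/2}A^{1/2}\ge 0$, shows $R(A,B)=0$ forces $A=B$, and bootstraps the resulting operator inequality through iterated square roots to conclude that $\tr\rho\log\sigma$ is strictly concave for fixed $\rho>0$ (Corollary~\ref{conclog}). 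You instead compute the second derivative of $t\mapsto S(\rho\|\sigma_t)$ along the segment joining two putative minimizers using the integral representation $\log\sigma=\int_0^\infty\left[(s+1)^{-1}I-(sI+\sigma)^{-1}\right]ds$; the rearrangement $\tr\left[\rho B_s\Delta B_s\Delta B_s\right]=\tr\left[(B_s^{1/2}\rho B_s^{1/2})X_s^2\right]$ with $X_s=B_s^{1/2}\Delta B_s^{1/2}$ is correct and shows the integrand is nonnegative and vanishes only if $\Delta=0$, since $B_s^{1/2}\rho B_s^{1/2}>0$ when $\rho>0$. This is a clean, self-contained second-derivative test that replaces the paper's appendix entirely; the paper's operator-inequality route is longer but delivers the stronger matrix-level inequality (\ref{stronglogconcav1}) rather than only its trace against $\rho$. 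The only steps worth making explicit in your version are the justification for differentiating twice under the integral sign (routine, since $\sigma_t$ ranges over a compact subset of the open positive cone) and the remark that a convex function constant on a segment has identically vanishing second derivative there.
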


 \begin{proof} \textbf{(a)}
 We first assume that $\sigma$ is a CSS of $\rho$.
 Recall that any linear functional $\Phi$ on $\rH_n$ is of the form $\Phi(X)=\tr(\phi X)$ for some $\phi\in\rH_n$.
 Now, since $\mathcal{D}$ is a closed convex subset of $\rH_{n,+,1}$ it follows (from the supporting hyperplane theorem) that
 for each boundary point $\sigma\in\partial\mathcal{D}$ there exists a nonzero linear functional on
 $\Phi:\rH_n\to \R$, represented by $\phi\in\rH_n$, satisfying the following condition:
 \begin{equation}
 \Phi(\sigma)\le \Phi(\sigma') \textrm{ for all } \sigma'\in\mathcal{D}.
 \end{equation}
 Note that the equation above holds true if $\phi$ is replaced by $\phi-aI$ (this is because $\tr\sigma=\tr\sigma'=1$).
 Moreover, since $\phi\neq 0$ we can normalize it. Therefore, there exists $\phi\in\rH_n$ satisfying (\ref{phisuphy})
 and the normalization
 \begin{equation}\label{norphi}
 \tr\phi^2=1.
 \end{equation}
 For most $\sigma$ on the boundary $\partial\mathcal{D}$, the supporting hyperplane of $\mathcal{D}$ at
 $\sigma$ is unique (see Fig.~1).  This is equivalent to say that $\phi\in\rH_n$ satisfying the conditions
 in Eq.~(\ref{phisuphy}) and (\ref{norphi})
 is unique.  However, for some special boundary points $\sigma\in\partial\mathcal{D}$,
 there is a cone of such $\phi$ of dimension greater than one satisfying (\ref{phisuphy}) (see Fig.~1).

 Now, denote $\phi':=-(L_{\sigma}(\rho)-I)$. Since we assume that $\sigma$ is a CSS of $\rho$ we get from
 Eq.~(\ref{necmaxcon}) the condition  $\tr(\phi'\sigma')\ge \tr(\phi'\sigma)=0$.  Recall that $L_{\sigma}(\sigma)=I$.
 Hence $\phi'=-L_{\sigma}(\rho-\sigma)$.  Since $\rho\ne \sigma$ and $L_{\sigma}$ is invertible, it follows that $\phi'\ne 0$.
 Hence, $\phi'$ can be normalized such that $\phi'=x\phi$, where $\phi$ satisfies Eq.~(\ref{norphi}) and $x>0$.  Apply $L_{\sigma}^{-1}$ to $\phi$ to deduce (\ref{rhoformul}).   We remark that $\tr L_{\sigma}^{-1}(\phi)=0$.  Indeed
 $$0=\tr \phi \sigma=\tr\phi L_{\sigma}^{-1}(I)=\tr L_{\sigma}^{-1}(\phi) I= \tr L_{\sigma}^{-1}(\phi).$$

Assume now that $0<\sigma\in\partial\mathcal{D}$, and let $\phi$ be a supporting hyperplane at $\sigma$, satisfying
Eq.~(\ref{phisuphy}) and (\ref{norphi}). Set $\rho\equiv\rho(x,\sigma)$ as in Eq.(\ref{rhoformul}). We want to show that for this $\rho$,
$E_R(\rho)=S(\rho\|\sigma)$.
 Recall first that the relative entropy $S(\rho\|\sigma'):=\tr(\rho\log\rho)-\tr(\rho\log\sigma')$ is jointly
 convex in its arguments~\cite[Thm 11.12]{NC}.  By fixing the first variable $\rho$ we deduce that
 $\tr (\rho\log\sigma')$ is concave on $\mathcal{D}$.  Consider the function
 $$f(t):=\tr (\rho\log((1-t)\sigma+t\sigma')), \quad t\in[0,1],$$
 where $\rho\ge 0$ is given by Eq.~(\ref{rhoformul}) and $x>0$.  The joint convexity of the relative entropy
 implies that $f(t)$ is concave.  Now, to show that the minimum of $S(\rho\|\sigma')$ is obtained at $\sigma'=\sigma$, it is
 enough to show that $f(0)\ge f(1)$ for each $\sigma'\in \mathcal{D}$.  To see that, we first show
 show that $f'(0)\le 0$, which then, combined with concavity of $f$, implies that $f(0)\ge f(1)$.
For small $t$ we have $$\log(\sigma+t(\sigma-\sigma'))=\log\sigma+tL_{\sigma}(\sigma'-\sigma)+O(t^2).$$
Hence,
 \begin{align}\label{compf'0}
f'(0) & =\tr(\rho L_{\sigma}(\sigma'-\sigma))=\nonumber\\
& =\tr\left[L_{\sigma}(\rho)(\sigma'-\sigma)\right]=\tr\left[(I-x\phi)(\sigma'-\sigma)\right]\nonumber\\
&=x\tr(\phi\sigma-\phi\sigma')=-x\tr(\phi\sigma')\le 0.
 \end{align}
 This completes the proof of part \textbf{(a)}. Moreover, $E_{R}(\rho)=\tr(\rho\log\rho)-\tr(\rho\log\sigma)>0$, since $\rho\ne \sigma$.  Hence $\rho$ is entangled.

 \textbf{(b)} This part follows from the strong concavity of $\log\sigma$ (see appendix~\ref{strong}).
 Therefore, from Corollary~\ref{conclog} of appendix A, it follows that for a fixed entangled state $\rho>0$, the function
 $\tr\rho\log\sigma$ is a strict concave function on the open set of all strictly positive Hermitian matrices in $\rH_n$.
 Hence, if both $\sigma$ and $\sigma'$ are CSS of $\rho$, then both are full rank and we have $\tr \rho\log\sigma=\tr\rho\log\sigma'$. Hence, for $t\in(0,1)$ we set $\sigma''\equiv t\sigma+(1-t)\sigma'$ and from the strong concavity
 $$
 \tr\rho\log\sigma''>t\tr\rho\log\sigma+(1-t)\tr\rho\log\sigma'=\tr\rho\log\sigma\;,
 $$
 in contradiction with the assumption that $\sigma$ is a CSS.
 \end{proof}

 \begin{corollary}
 Let $0<\rho\in\rH_{n,+,1}$ be entangled state and let $\sigma$ be a CSS of $\rho$.
 Then, $\sigma$ is also the CSS of $\rho(t)\equiv t\rho+(1-t)\sigma$ for all $t\in [0,t_{\max}]$,
 where $t_{\max}>1$ is the maximum $t$ such that $\rho(t)\geq 0$.
 \end{corollary}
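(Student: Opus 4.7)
The plan is to observe that the corollary is an almost immediate consequence of the parametric description of entangled states with a given CSS given in Theorem~\ref{main}(a), so the work reduces to keeping track of the parameter $x$ along the line segment from $\sigma$ to $\rho$.

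First, since $\sigma$ is a CSS of the entangled state $\rho$, the preceding corollary forces $\sigma\in\partial\mathcal{D}$, and then Theorem~\ref{main}(a) applies: there exists a supporting hyperplane $\phi\in\rH_n$ at $\sigma$ satisfying (\ref{phisuphy}) and (\ref{norphi}) together with a scalar $x_0\in(0,x_{\max}]$ such that
\begin{equation*}
\rho=\sigma-x_0\,L_{\sigma}^{-1}(\phi),
\end{equation*}
where $x_{\max}$ is the largest value of $x\ge 0$ for which $\sigma-x\,L_{\sigma}^{-1}(\phi)\in\rH_{n,+,1}$.

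Next I would substitute this expression into the definition $\rho(t)=t\rho+(1-t)\sigma$ and use linearity to obtain
\begin{equation*}
\rho(t)=\sigma-(tx_0)\,L_{\sigma}^{-1}(\phi),
\end{equation*}
which is exactly of the form (\ref{rhoformul}) with parameter $x=tx_0$. Thus the condition $\rho(t)\ge 0$ coincides with $tx_0\le x_{\max}$, which gives $t_{\max}=x_{\max}/x_0\ge 1$, consistent with the bound stated in the corollary. For any $t\in(0,t_{\max}]$ the value $tx_0$ lies in $(0,x_{\max}]$, so the sufficient direction of Theorem~\ref{main}(a) applies to $\rho(t)$ with the \emph{same} $\sigma$ and $\phi$, yielding $E_R(\rho(t))=S(\rho(t)\|\sigma)$, i.e.\ $\sigma$ is a CSS of $\rho(t)$. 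The endpoint $t=0$ is trivial because $\rho(0)=\sigma$ is its own closest separable state.

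There is no real obstacle here beyond bookkeeping: the whole content is the observation that the map $x\mapsto \sigma-xL_{\sigma}^{-1}(\phi)$ is an affine parametrization of a line segment in $\rH_n$ along which Theorem~\ref{main}(a) certifies $\sigma$ as a CSS for every admissible parameter value, and that the segment from $\sigma$ through $\rho$ to the boundary of $\rH_{n,+,1}$ is precisely this segment traversed from $x=0$ to $x=x_{\max}$.
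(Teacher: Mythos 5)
Your proposal is correct and follows essentially the same route as the paper: write $\rho=\rho(x_0,\sigma)$ via Theorem~\ref{main}, observe that $\rho(t)=t\rho+(1-t)\sigma=\rho(tx_0,\sigma)$ is of the same form, and invoke the sufficiency direction of Theorem~\ref{main} again. Your version merely spells out the bookkeeping ($t_{\max}=x_{\max}/x_0$, the trivial endpoint $t=0$) that the paper leaves implicit.
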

 \begin{proof}
 Since $\sigma$ is the CSS of $\rho$, from theorem~\ref{main} we have $\rho=\rho(x,\sigma)$ for some $x$.
 Hence $\rho(t)=\rho(tx,\sigma)$ is of the same form. From theorem~\ref{main} $\sigma$ is a CSS of $\rho(t)$.
 \end{proof}
 \begin{remark}
A weaker version of the corollary above was proved in~\cite{Ved98}; note that here $t$ can be greater than one.
 \end{remark}

  \section{Bipartite partial transpose analysis}

 In the following we show how to apply Theorem~\ref{main} to specific examples. In particular, we focus
 on the bipartite case (i.e. $n=n_1n_2$) and we will assume that $\mathcal{D}$ in Eq.(\ref{def}) is the set of PPT states. In the $2\times2$ and $2\times3$ case, $\mathcal{D}$ is also the set of separable states~\cite{Hor95}. The boundary of the PPT states is simple to characterize. If $\sigma\in\mathcal{D}$ satisfies $\sigma>0$ and also $\sigma^{\Gamma}>0$, where $\Gamma$ is the partial transpose, then $\sigma$ must be an interior point of $\mathcal{D}$. If on the other hand $\sigma$ or $\sigma^{\Gamma}$ are singular, then $\sigma$ must be on the boundary of $\mathcal{D}$. We therefore have:
$$
\partial\mathcal{D}=\left\{\sigma\in\mathcal{D} \;\Big|\; det(\sigma^{\Gamma}\sigma)=0\right\}\;.
$$

Suppose now that $0<\sigma\in\partial\mathcal{D}$. Hence, $\sigma^{\Gamma}$ has at least one zero eigenvalue.
Let $|\varphi\rangle$ be a normalized eigenstate corresponding to an eigenvalue zero and define an Hermitian matrix $\phi=(|\varphi\rangle\langle\varphi|)^{\Gamma}$. Since the partial transpose is self-adjoint with respect to the inner product $\langle\rho,\rho'\rangle=\tr(\rho\rho')$, it follows that $\phi$ satisfies Eq.(\ref{phisuphy}) and is normalized (i.e. $\tr\phi^2=1$). That is,
$\phi$ represents the supporting hyperplane at $\sigma$.
Note that if $\sigma^{\Gamma}$ has more than one zero eigenvalue than clearly $\phi$ is not unique and in fact there is
a cone of supporting hyperplanes of $\mathcal{D}$ at $\sigma$ (see points C and E in Fig.1). To illustrate this point in
more details, we discuss now the case where $n_1=2$ (i.e. the first system is a qubit) and $n_2\equiv m$.

 In the $2\times m$ case, we can write any state $\sigma\in\rH_{2m,+,1}$  using the block representation of
 \begin{equation}\label{sigform}
 \sigma=\left[\begin{array}{cc}A&B\\\ B^\dag&C\end{array}\right]\in \C^{(2m)\times (2m)},
 \quad A,B,C\in\C^{m\times m},
 \end{equation}
 and $A^\dag=A$, $C^\dag=C$.
The partial transpose of $\sigma$ is given by (here the partial transpose corresponds to the transpose on the first qubit system; i.e.
it is the left partial transpose):
 $\sigma^{\Gamma}:=\left[\begin{array}{cc}A&B^\dag\\B&C\end{array}\right]$.
 The following theorem shows that $\sigma^{\Gamma}$ can have more then one zero eigenvalue.
 \begin{theorem}\label{pt2theo}  Let $m\ge 2$.  If $\sigma>0$ and $\sigma^{\Gamma}\ge 0$ then $\rank \sigma^{\Gamma}\ge m+1$.
 Furthermore, for each $k=0,\ldots,m-1$ there exist strictly positive
 hermitian matrices $\sigma\in\rH_{2m,+,1}$ such that $\sigma^{\Gamma}\ge 0,\rank \sigma^{\Gamma}=2m-k$.
 \end{theorem}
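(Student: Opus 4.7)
The plan for part (a) is to exploit a factorization of $\sigma^\Gamma$ and reduce the rank lower bound to the observation that a self-commutator is traceless and hence never strictly positive definite. First I would note that since $\sigma>0$, its top-left principal block $A$ is itself positive definite with $\rank A=m$; because $A$ is also a principal submatrix of the positive semidefinite $\sigma^\Gamma$, the standard rank inequality for principal submatrices of PSD matrices already gives $\rank\sigma^\Gamma\ge m$. It therefore suffices to rule out equality. Suppose $\rank\sigma^\Gamma=m$ and write $\sigma^\Gamma=WW^\dagger$ for some $W\in\C^{2m\times m}$ of full column rank; splitting $W=\binom{\Phi_0}{\Phi_1}$ into two $m\times m$ blocks, the identity $A=\Phi_0\Phi_0^\dagger>0$ forces $\Phi_0$ to be invertible. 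Setting $M:=\Phi_0^{-1}\Phi_1$ and using $B=\Phi_1\Phi_0^\dagger$, $C=\Phi_1\Phi_1^\dagger$, the Schur complement of $\sigma$ with respect to $A$ simplifies to
$$
C-B^\dagger A^{-1}B=\Phi_0(MM^\dagger-M^\dagger M)\Phi_0^\dagger.
$$
For $\sigma>0$ this must be positive definite, but $\tr(MM^\dagger-M^\dagger M)=0$ forbids it; this yields the desired contradiction.

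For part (b), my plan is an explicit construction within the same parametrization. For $k=0$ I simply take $\sigma=I_{2m}/(2m)$, which has $\sigma^\Gamma=\sigma$ of rank $2m$. For $1\le k\le m-1$, I take (up to normalization)
$$
\sigma=\begin{pmatrix} I_m & X \\ X^\dagger & XX^\dagger+YY^\dagger \end{pmatrix},
$$
with $X=\sum_{i=1}^{m-1}a_i|i\rangle\langle i+1|$ a weighted one-step shift and $Y\in\C^{m\times(m-k)}$ with linearly independent columns. Because $A=I_m$, the Schur complement of $\sigma^\Gamma$ is exactly $YY^\dagger$, which has rank $m-k$, so $\rank\sigma^\Gamma=2m-k$. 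For this $X$ the diagonal of $XX^\dagger-X^\dagger X$ equals $(|a_1|^2,\,|a_2|^2-|a_1|^2,\,\ldots,\,|a_{m-1}|^2-|a_{m-2}|^2,\,-|a_{m-1}|^2)$. I would then choose $|a_1|^2<\cdots<|a_k|^2=|a_{k+1}|^2=\cdots=|a_{m-1}|^2$ so the first $k$ diagonal entries are positive and the remaining $m-k$ are non-positive, and take $Y=[\sqrt{c_{k+1}}\,|k+1\rangle,\ldots,\sqrt{c_m}\,|m\rangle]$ with the $c_i$ large enough so that the Schur complement $XX^\dagger-X^\dagger X+YY^\dagger$ of $\sigma$ is diagonal and positive definite. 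Normalizing gives a state in $\rH_{2m,+,1}$ with the prescribed rank.

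The main difficulty is identifying the right reformulation in part (a): once one decomposes $\sigma^\Gamma=WW^\dagger$ and recognizes that the Schur complement of $\sigma$ collapses to a conjugated self-commutator, the trace identity finishes the argument immediately. Part (b) is then essentially a backward engineering of the same parametrization, controlling $\rank\sigma^\Gamma$ through the rank of $Y$ while ensuring $\sigma>0$ via the signature of $XX^\dagger-X^\dagger X$.
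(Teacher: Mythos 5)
Your proof is correct, but it takes a genuinely different route from the paper's on both halves, and in part (b) a substantially more elementary one. For the lower bound, both arguments ultimately reduce to the Schur complement of $\sigma$ with respect to $A$: the paper observes that $\mathrm{rank}\,\sigma^{\Gamma}=m$ forces $BA^{-1}B^{\dagger}-B^{\dagger}A^{-1}B>0$ and kills this with the determinant identity $\det(BA^{-1}B^{\dagger})=\det(B^{\dagger}A^{-1}B)$ together with strict eigenvalue monotonicity, whereas you factor $\sigma^{\Gamma}=WW^{\dagger}$ and exhibit that same Schur complement as $\Phi_0(MM^{\dagger}-M^{\dagger}M)\Phi_0^{\dagger}$, a congruence of a traceless self-commutator, which can never be positive definite; your version isolates the obstruction more transparently (it is literally $\mathrm{tr}\,[M,M^{\dagger}]=0$) and avoids the eigenvalue-comparison step. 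The difference is larger in part (b): the paper also parametrizes $C=BA^{-1}B^{\dagger}+E$ so that $\mathrm{rank}\,\sigma^{\Gamma}=m+\mathrm{rank}\,E$, but to make $BB^{\dagger}-B^{\dagger}B+E>0$ with $\mathrm{rank}\,E=1$ it invokes Schur--Horn to produce a zero-diagonal Hermitian matrix with $m-1$ positive eigenvalues and then a Rellich/Kato perturbation argument with a large parameter $t$ to realize it (approximately) as a self-commutator $F-\Lambda$. Your weighted one-step shift $X=\sum_i a_i|i\rangle\langle i+1|$ makes $XX^{\dagger}-X^{\dagger}X$ an explicit diagonal matrix whose signature you control directly by the choice of the $|a_i|$, and the diagonal correction $YY^{\dagger}$ simultaneously fixes $\mathrm{rank}\,\sigma^{\Gamma}=2m-k$ and restores positive definiteness; everything is finite, explicit, and checkable by inspection. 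I verified the details: the diagonal of $XX^{\dagger}-X^{\dagger}X$ is as you state, your choice $|a_1|^2<\cdots<|a_k|^2=\cdots=|a_{m-1}|^2$ gives exactly $k$ positive entries followed by $m-k-1$ zeros and one negative entry, and taking $c_m>|a_{m-1}|^2$ (and any $c_i>0$ otherwise) makes the Schur complement of $\sigma$ strictly positive while $\mathrm{rank}\,YY^{\dagger}=m-k$; the $k=0$ case via $\sigma=I_{2m}/(2m)$ and the final trace normalization are both fine. Your construction could in fact replace the paper's existence argument with concrete examples.
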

 \begin{proof}
 Recall that since $\sigma\in \rH_{2m,+,1}$ is strictly positive definite we have $A>0$.
 Hence, $\sigma$ and $\sigma^{\Gamma}$ are equivalent to the following
 block diagonal hermitian matrices
 \begin{align*}
 \hat \sigma & =\left[\begin{array}{cc}A&0\\0&C-B^\dag A^{-1}B\end{array}\right]\\
& =
 \left[\begin{array}{cc}I&0\\-B^\dag A^{-1}&I\end{array}\right]
 \left[\begin{array}{cc}A&B\\B^\dag&C\end{array}\right]
 \left[\begin{array}{cc}I&0\\-B^\dag A^{-1}&I\end{array}\right]^\dag\\
 \tilde \sigma & =\left[\begin{array}{cc}A&0\\0&C-BA^{-1}B^\dag\end{array}\right]\\
 & =
 \left[\begin{array}{cc}I&0\\-BA^{-1}&I\end{array}\right]
 \left[\begin{array}{cc}A&B^\dag\\B&C\end{array}\right]
 \left[\begin{array}{cc}I&0\\-BA^{-1}&I\end{array}\right]^\dag\;,
 \end{align*}
respectively.
Hence
 \begin{align*}
 & \sigma>0\iff C-B^\dag A^{-1}B>0\\
 & \sigma^{\Gamma}\ge 0\iff
 C-BA^{-1}B^\dag\ge 0
 \end{align*}
 Note first that $C\neq BA^{-1}B^\dag$. Otherwise, we get that $BA^{-1}B^\dag>B^\dag A^{-1}B$, and since
 $B^\dag A^{-1}B\ge 0$ it follows that each eigenvalue of  $BA^{-1}B^\dag$ must be positive and the $i-th$ eigenvalue of
 $BA^{-1}B^\dag$
 must be strictly greater then the $i-th$ eigenvalue of $B^\dag A^{-1}B$.
 This can not be true since $\det BA^{-1}B^\dag=\det B^\dag A^{-1}B$.
 Hence $\rank \sigma^{\Gamma}\geq m+1$. This complete the first part of the theorem.

 Next, let $E\ge 0$.   Then to satisfy the condition $\sigma^{\Gamma}\geq 0$ of the above inequality we define
 $C$ by
 \begin{equation}\label{defC22}
 C= BA^{-1}B^\dag+E,\;\Rightarrow\;  \rank \sigma^{\Gamma}= m+\rank E.
 \end{equation}
 With the above identity the condition that $\sigma>0$ is equivalent to
 \begin{equation}\label{Econd}
 BA^{-1}B^\dag-B^\dag A^{-1}B+E>0.
 \end{equation}

 We first show that one can choose $A>0$ and $B$ and $E\ge 0$
 such that $\rank E=1$ and Eq.~(\ref{Econd}) hold.  For this purpose, we will see that it is enough to find $A>0$
 and $B$ such that $ G:=BA^{-1}B^\dag- B^\dag A^{-1}B$ has exactly $m-1$
 strictly positive eigenvalues and one negative eigenvalue. 

 Let $F\ge 0$ given.  Then $F$ has the spectral decomposition $F=U\Lambda U^\dag$, where $U$ is unitary and
 $\Lambda\ge 0$ is a diagonal matrix with the diagonal entries equal to the nonnegative eigenvalues
 of $F$.  Choose  $B=U\Lambda^{\frac{1}{2}}$ and $A=I$.  Then $G=F-\Lambda$.
 We claim that we can choose $F$ such that $G$ has $m-1$ positive eigenvalues and one negative eigenvalue.

 Fix $H=[h_{ij}]\in \rH_{2m}$ with zero diagonal, i.e $h_{ii}=0$ for $i=1,\ldots,n$, and a diagonal $D=\diag(d_1,\ldots,d_m)$. Assume that
 $d_1>\ldots>d_m>0$.  Choose $t\gg 1$ and consider $H(t)=tD+H=t(D+\frac{1}{t}H)$.
 Set $z=\frac{1}{t}$ and recall that $D(z)=D+zH$ has analytic eigenvalues for small $z$.
 Since the eigenvalues of $D$ are simple, and $D\e_i=d_i\e_i$, where $\e_i=(\delta_{1i},\ldots,\delta_{ni})\trans$
 it follows that the eigenvalues $\lambda_1(z),\ldots,\lambda_m(z)$ of $D(z)$ have the Taylor expansion
 $$
 \lambda_i(z)=d_i+O(z^2), \quad i=1,\ldots,m
 $$
 since $H$ has zero diagonal, (see e.g.~\cite{Kat80}).
 By choosing $F=H(t), t\gg 1$ we deduce that $G(t):=H(t)-\Lambda(t)=H+O(\frac{1}{t})$.

 It is left to show that there exist hermitian $H$ with zero diagonal entries and $m-1$ positive eigenvalues. Let $\lambda_1\ge \ldots \ge \lambda_m$.  It is known (Schur's theorem, e.g.~\cite[(5.5.8)]{HJ99}) that the sequence $(\lambda_1,\ldots,\lambda_m)$
 must majorize the sequence of the diagonal entries $(0,\ldots,0)$ of $H$.
 \begin{equation}\label{majcon}
 \sum_{i=1}^r \lambda_i\ge \sum_{i=1}^r 0=0, r=1,\ldots,m-1, \;\; \sum_{i=1}^m \lambda_i=\sum_{i=1}^m 0 =0.
 \end{equation}
 Furthermore, if $\lambda_1\ge\ldots\ge \lambda_n$ satisfies the above conditions, then there exists a real
 symmetric matrix $H$ with zero diagonal and the eigenvalues $\lambda_1,\ldots,\lambda_m$ (see Theorem 4.3.32 in~\cite{HJ88}).
 Choose $\lambda_1\ge\ldots\ge\lambda_{m-1}>0$ and $\lambda_m=-\sum_{i=1}^{m-1} \lambda_i$.
 Then (\ref{majcon}) holds.  Thus there exists $H$ with zero diagonal and $m-1$ strictly positive eigenvalues.  Hence for $t\gg 1$ $G(t)$ has $m-1$ strictly positive eigenvalues. Choose $t_0\gg 1$
 and set $G=G(t_0)$.  Let $G|u\rangle=\lambda_m|u\rangle$, where $\lambda_m<0$.  Let $E_0=-2\lambda_m |u\rangle\langle u|$.  So $G+E_0>0$ and $\rank E_0=1$.  For $k>1$ let $E_1\ge 0$ such that $\rank (E_0+E_1)=k$.
 Then $E=E_0+E_1$.
 \end{proof}

 Note that from Theorem~\ref{main} it follows that
 we can rewrite the expression of the relative entropy of $\rho$
 similar to the formula (7) of Ref.~\cite{MI}. That is,
 \begin{equation}\label{relentrfor}
 E_R(\rho)=\tr(\rho\log\rho)-\tr(\sigma\log\sigma)+x\tr (L_{\sigma}^{-1}(\phi)\log\sigma).
 \end{equation}

 From the theorem above it follows that for the case $m=2$, if $\sigma>0$ then $\sigma^{\Gamma}$ can have at most
 one zero eigenvalue. Hence, for this case $\phi$ is unique as pointed out in~\cite{MI}. For $m=3$ it follows from the theorem
 that there exists $\sigma>0$ such that $\sigma^{\Gamma}$ has two independent eigenstates corresponding to zero eigenvalue.
 Here is an example of such a state $\sigma$ of the form (\ref{sigform})
 $$\sigma=\frac{1}{229}\left[\begin{array}{cccccc}1&0&0&0&6&8\\0&1&0&1&0&0\\0&0&1&0&0&0\\0&1&0&100&0&0\\
 6&0&0&0&46&60\\8&0&0&0&60&80
 \end{array}\right].$$

\section{Tensor Products}

 We now show briefly how to extend the results presented in this paper to tensor product of separable
 states. For this purpose we denote by $\mathcal{D}^{A}$ and $\mathcal{D}^{B}$ the set of separable states in Alice's lab and Bob's lab, respectively.  We also denote by $\mathcal{D}^{AB}$ the set of separable states of the composite system.
 First observe that if $\sigma_a\in \partial\mathcal{D}^A$ then for any separable state $\sigma_{b}'\in\mathcal{D}^{B}$
 the state $\sigma_a\otimes\sigma_{b}'\in \partial\mathcal{D}^{AB}$.  Furthermore, let $\phi_a\in\rH_{n}^{A}$ be a supporting hyperplane
 of $\mathcal{D}^{A}$ at $\sigma_a$ of the form given in Eq.~(\ref{phisuphy}) and (\ref{norphi}).
 Let $\phi_{b}'\in\rH_{n'}^{B}$, which is nonnegative on $\mathcal{D}^{B}$, i.e. $\tr(\phi_{b}'\sigma_{b}')\ge 0$ for all $\sigma_{b}'\in\mathcal{D}^{B}$ (i.e. $\phi_{b}'$ is an entanglement witness in Bob's lab).
 Assume the normalization $\tr((\phi_{b}')^2)=1$.
 Then it is straightforward to show that $\phi:=\phi_a\otimes\phi_{b}'$ satisfies Eq.~(\ref{phisuphy}) and (\ref{norphi})
 for any
 $\sigma'\in\mathcal{D}^{AB}$ and $\sigma=\sigma_a\otimes\sigma_{b}'$.

 Assume first that $\sigma_a>0,\sigma_{b}'>0$.  Then we can use $\phi=\phi_a\otimes \phi_{b}'$ in the formula
 of Eq.~(\ref{rhoformul})
 to find the corresponding entangled state $\rho\in \rH_{nn',+,1}$.  If $\sigma_a>0$ and $\sigma_{b}'$
 is singular, we can still use the formula in Eq.~(\ref{rhoformul}), where $\phi_{b}'\ge 0$ on $\mathcal{D}^B$ and $\phi_b'\x=\0$
 if $\sigma_{b}'\x=\0$.  If $\sigma_a$ is singular then we can use the formula given in the next section.

 \section{The case of singular CSS}\label{singular}

 If the entangled state $\rho$ is not full rank then the CSS $\sigma$ can be singular (i.e. not full rank).
 More precisely, if $\x$ is an eigenvector of $\sigma$ corresponding to zero eigenvalue then $\x$
 must also be an eigenvector of $\rho$ corresponding to zero eigenvector. For the singular $\sigma$
 we work below with the basis where $\sigma$ is diagonal
 \begin{equation}\label{singdiag}
 \sigma=\diag(s_1,\ldots,s_n)\;,
 \end{equation}
 where $s_1\ge\ldots \ge s_r> 0=s_{r+1}=\ldots=s_n$ and $1\le r<n$.
 Here $r=\rank \sigma<n$ since $\sigma$ is singular.
 Note that in this basis $\rho$ has the following block diagonal form
 \begin{equation}\label{bldiagf}
 \rho=\left[\begin{array}{cc} \rho_{11}&0\\0&0\end{array}\right], \textrm{where }  \rho_{11}\in \rH_{r,+,1}.
 \end{equation}

With this eigen-basis of $\sigma$, we define the matrices $T(\sigma),\;S(\sigma)$ on the support of $\sigma$ just as in Eq.~(\ref{defTalpha}),
and zero outside the support (i.e. the last $n-r$ rows and columns of $T(\alpha),S(\alpha)$ are set to zero).
 Note that with this definition
 $$T(\sigma)\circ S(\sigma)=S(\sigma)\circ T(\sigma)= P_{\sigma}=\diag(\underbrace{1,\ldots,1}_r,0,\ldots,0),$$
 where $P_{\sigma}$ the projection to the support of $\sigma$.  Define now the linear operators $L_{\sigma}, L_{\sigma}^{\ddag}:\rH_n\to\rH_n$
 \begin{equation}\label{defopMN}
 L_{\sigma}(\xi):=T(\sigma)\circ \xi, \quad L_{\sigma}^{\ddag}(\xi):=S(\sigma)\circ\xi.
 \end{equation}
 Then $L_{\sigma}$ and $L_{\sigma}^{\ddag}$ are selfadjoint and
 \begin{equation}\label{lsigrel}
 L_{\sigma}L_{\sigma}^{\ddag}=L_{\sigma}^{\ddag}L_{\sigma}=P_{\sigma}.
 \end{equation}
 Note that $L_{\sigma}^{\ddag}$ is the \emph{Moore-Penrose} inverse of $L_{\sigma}$, and that if $\sigma>0$ then $L_{\sigma}^{\ddag}=L_{\sigma}^{-1}$. Note also that $L_{\sigma}(\sigma)=P_{\sigma}$.

 With the above definition for $L_\sigma$, Eq.~(\ref{analytic}) can be generalized to the singular case:
 \begin{lemma}\label{sinexp}  Let $\sigma\in\rH_{n,+}$ be a nonzero singular matrix.  Let
 $\xi\in \rH_n$ be positive on the eigenvector subspace of $\sigma$
 corresponding to the zero eigenvalue.  ($\x^\dagger \xi \x> 0$ if $\sigma \x=\0$ and $\x\ne \0$.)
 Assume that $\rho\in \rH_n$ is nonzero and $\rho\x=0$ if $\sigma\x=0$.  Then there exists
 $\varepsilon>0$ such that for any $t\in (0,\varepsilon)$ the following hold.
 \begin{equation}\label{varfor}
\tr(\rho\log(\sigma+t\xi))=\tr(\rho\log\sigma)+t\tr(\rho L_{\sigma}(\xi))+O(t^2|\log t|).
 \end{equation}
 \end{lemma}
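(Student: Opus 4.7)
The plan is to diagonalize $\sigma$ as in (\ref{singdiag}), apply Rellich's analytic perturbation theory to the Hermitian family $\sigma+t\xi$, and track its two eigenvalue clusters separately: the $r$ eigenvalues remaining near $s_1,\ldots,s_r>0$ and the $n-r$ eigenvalues that tend to $0$ at rate $t$. Decompose $\xi$ into $2\times 2$ blocks adapted to the direct sum $\mathrm{range}\,\sigma\oplus\ker\sigma$; the positivity hypothesis on $\xi$ on $\ker\sigma$ forces the lower-right block $\xi_{22}>0$, while the hypothesis $\rho\x=\0$ on $\ker\sigma$ combined with Hermiticity of $\rho$ forces it into the block form (\ref{bldiagf}). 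For $t\in(0,\varepsilon)$ small enough, $\sigma+t\xi>0$, and Rellich gives an analytic orthonormal spectral decomposition $\sigma+t\xi=\sum_i\lambda_i(t)|u_i(t)\rangle\langle u_i(t)|$ on a neighbourhood of $t=0$. Order so that $\lambda_i(0)=s_i$ for $i\le r$ (uniformly bounded away from $0$) and $\lambda_i(0)=0$ for $i>r$; degenerate first-order perturbation in $\ker\sigma$ yields $\lambda_i(t)=t\mu_i+O(t^2)$ with $\mu_i>0$ (the eigenvalues of $\xi_{22}$), whence $\log\lambda_i(t)=\log t+O(1)$ in the singular cluster and $\log\lambda_i(t)$ bounded in the regular cluster.

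\textbf{The singular cluster.} Split the diagonal expansion
\begin{equation*}
\tr\bigl(\rho\log(\sigma+t\xi)\bigr)=\sum_i\log\lambda_i(t)\,\langle u_i(t),\rho u_i(t)\rangle
\end{equation*}
at $i=r$. For $i>r$ the eigenvector $u_i(0)\in\ker\sigma$, so $\rho u_i(0)=0$ by hypothesis. Analyticity of $u_i(t)$ and Hermiticity of $\rho$ give
\begin{equation*}
\langle u_i(t),\rho u_i(t)\rangle=t^2\langle u_i'(0),\rho u_i'(0)\rangle+O(t^3)=O(t^2),
\end{equation*}
the zeroth- and first-order pieces both vanishing. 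Multiplying by $\log\lambda_i(t)=O(|\log t|)$ and summing over the finitely many $i>r$ yields an $O(t^2|\log t|)$ contribution, absorbed into the stated error.

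\textbf{The regular cluster and identification of $L_\sigma(\xi)$.} For $i\le r$ both $\lambda_i(t)$ and $u_i(t)$ are analytic with $\lambda_i(t)$ bounded away from $0$, so each summand is real-analytic at $t=0$. The zeroth-order sum is $\sum_{i\le r}\log s_i\,\rho_{ii}=\tr(\rho\log\sigma)$ under the convention that $\log\sigma$ vanishes on $\ker\sigma$. For the first-order coefficient, assume generically that the $s_i$ are simple (the degenerate case follows by continuity); nondegenerate perturbation theory gives $\lambda_i'(0)=\xi_{ii}$ and $u_i'(0)=\sum_{j\ne i}\xi_{ji}(s_i-s_j)^{-1}e_j$. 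Because $\rho_{ji}=0$ for $j>r$, only indices $j\le r$ contribute. Symmetrizing under $i\leftrightarrow j$ converts $\log s_i/(s_i-s_j)$ into $(\log s_i-\log s_j)/(s_i-s_j)=T(\sigma)_{ij}$; adding the diagonal term $\xi_{ii}\rho_{ii}/s_i=T(\sigma)_{ii}\xi_{ii}\rho_{ii}$ collapses the first-order coefficient to $\sum_{i,j\le r}T(\sigma)_{ij}\xi_{ij}\rho_{ji}=\tr\bigl(\rho\,(T(\sigma)\circ\xi)\bigr)=\tr\bigl(\rho L_\sigma(\xi)\bigr)$, proving (\ref{varfor}).

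\textbf{Main obstacle.} The delicate step is the double cancellation reducing the singular-cluster contribution to $O(t^2|\log t|)$ rather than the naive $O(t|\log t|)$: it requires the vanishing of both the $O(1)$ and the $O(t)$ pieces of $\langle u_i(t),\rho u_i(t)\rangle$ for $i>r$, which relies on $\rho u_i(0)=0$ together with the Hermiticity of $\rho$. Without this cancellation the $|\log t|$ blow-up of the small eigenvalues would destroy the first-order expansion. The $i\leftrightarrow j$ symmetrization producing $T(\sigma)_{ij}$ and the density argument handling coincident $s_i$'s are secondary bookkeeping.
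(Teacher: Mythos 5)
Your proof is correct and follows essentially the same route as the paper's: after diagonalizing $\sigma$ and invoking Rellich, both arguments split the spectrum of $\sigma+t\xi$ into the cluster staying near $s_1,\ldots,s_r$ and the cluster vanishing like $t\mu_i$ with $\mu_i>0$, and both use $\rho\x=\0$ on $\ker\sigma$ (equivalently the block form (\ref{bldiagf})) to cancel the $O(1)$ and $O(t)$ contributions of the $\log t$-divergent cluster, leaving $O(t^2|\log t|)$. Your explicit Daleckii--Krein-type symmetrization identifying the first-order coefficient with $\tr(\rho L_\sigma(\xi))$ is merely a more detailed version of the paper's appeal to the nonsingular expansion (\ref{analytic}) on the support of $\sigma$.
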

\begin{proof}
 Without a loss of generality we may assume that $\sigma$ and $\rho$ of the form (\ref{singdiag})
 and (\ref{bldiagf}).  (However we do not need the assumption that $\rho_{11}\ge 0$.)
 Then there exists $\varepsilon>0$ such that for $\sigma(t)>0$
 for $t\in(0,\varepsilon)$.  Rellich's theorem yields that the eigenvalues and the eigenvectors
 $\sigma(t):=\sigma+t\xi$ are analytic in $t$ for $|t|<\varepsilon$.  Let $s_1(t),\ldots,s_n(t)$ be the analytic eigenvalues of $\sigma(t)$ such that
 $$s_i(t)=s_i+b_it +\sum_{j=2}b_{ij}t^j,\quad i=1,\ldots,n.$$
 The positivity assumption on $\xi$ implies that $b_i>0$ for $i=r+1,\ldots,n$.
 (Note that $s_i=0$ for $i>r$.)
 Rellich's theorem also claims that the eigenvectors of $\sigma(t)$ can parameterized analytically.
 So there exists a unitary $U(t), t\in [0,\varepsilon)$, depending analytically on $t$, for $|t|<\varepsilon$, such that the following conditions hold.
\begin{align*}
 & \sigma(t)=U(t)\diag(s_1(t),\ldots,s_n(t))U(t)^\dagger\\
 & U(t)U^\dagger (t)=I_n,\;
 U(t)=\sum_{j=0}^{\infty} t^j U_j,\; U_0=I_n.
 \end{align*}
 Hence
 \begin{align*}
 &\log(\sigma+t\xi)  =U(t)\diag(\log s_1(t),\ldots,\log s_n(t))U^\dagger(t)\\
& =U(t)\diag(0,\ldots,0,\log s_{r+1}(t),\ldots \log (s_n(t)))U^\dagger (t)\\
& + U(t)\diag(\log s_1(t),\ldots,
 \log s_r(t),0,\ldots,0)U^\dagger (t).
 \end{align*}
 Note that the last term in this expression in analytic it $t$ for $|t|<\varepsilon$.
 Clearly, $\log s_i(t)= \log (b_it) $ + analytic term.  Observe next that
 \begin{align*}
 &U(t)\diag(0,\ldots,0,\log s_{r+1}(t),\ldots \log (s_n(t)))U^\dagger (t)\\
 &=\diag(0,\ldots,0,\log s_{r+1}(t),\ldots \log (s_n(t))) \\
&+ t U_1\diag(0,\ldots,0,\log s_{r+1}(t),\ldots \log (s_n(t))) \\
 &+t\diag(0,\ldots,0,\log s_{r+1}(t),\ldots \log (s_n(t)))U_1^\dagger +O(t^2|\log t|).
 \end{align*}
 Using the standard fact that $\tr XY=\tr YX$ and the form of $\rho$ given by (\ref{bldiagf}) we deduce that
\begin{align*}
&\tr (\rho U(t)\diag(0,\ldots,0,\log s_{r+1}(t),\ldots \log (s_n(t)))U^\dagger (t))\\
& = O(t^2|\log t|).
 \end{align*}
 Hence
 \begin{align*}
 &\tr(\rho\log(\sigma+t\xi))\\
 &=\tr(\rho U(t)\diag(\log s_1(t),\ldots,
 \log s_r(t),0,\ldots,0)U^\dagger (t))\\
& +O(t^2|\log t|).
 \end{align*}
 Similar expansion result hold when we replace $\sigma$ by a a diagonal $\alpha>0$ as in the beginning of this section.  Combine these results to deduce the validity of (\ref{varfor}).
 \end{proof}

 From the lemma above it follows that Proposition~\ref{ness} holds true also for singular $\rho$ and singular CSS $\sigma$.
 To see that, let $\sigma$ be singular
 CSS of an entangled state $\rho$, and suppose first that $\sigma'>0$. Define also $\sigma(t)\equiv (1-t)\sigma+t\sigma'=\sigma+t(\sigma'-\sigma)$. Note that $\xi:=\sigma'-\sigma$ satisfies the assumptions of Lemma~\ref{sinexp}.  Thus, the arguments in Proposition~\ref{ness} yield that
 $\tr \rho L_{\sigma}(\sigma')\le \tr \rho L_{\sigma}(\sigma)=1$.  Using the continuity
 argument, we deduce that this inequality hold for any $\sigma'\in\mathcal{D}$.

 In Eq.~(\ref{phisuphy}) we defined the supporting hyperplane in terms of Hermitian matrix $\phi$ satisfying that $\tr\sigma\phi=0$.
 As we will see below, it will be more convenient to represent the supporting hyperplane of $\mathcal{D}$ at $\sigma$ in terms of
 $\psi\equiv I-\phi$. That is, the supporting hyperplane will be described by
 the linear functional $\Psi:\rH_n\to \R$, defined by
 $\Psi(\xi)=\tr (\psi\xi)$, where $\psi$ satisfies:
 \begin{equation}\label{propsi}
 \tr(\psi\sigma')\leq\tr(\psi\sigma)=1\;,\;\text{for all}\;\sigma'\in\mathcal{D}.
 \end{equation}
Note also that $\tr P_{\sigma}\sigma'\leq 1$ and therefore for any $x\in[0,1]$, $\psi(x)\equiv x\psi+(1-x)P_{\sigma}$ also satisfies
the same condition $\tr(\psi(x)\sigma')\leq\tr(\psi(x)\sigma)=1$.
We now ready to prove the main theorem for the case of singular CSS.

 \begin{theorem}\label{mainsing}
Let $\sigma\in\partial\mathcal{D}$ be a singular matrix in the boundary of $\mathcal{D}$, and let $\rho\in\rH_{n,+,1}$ be an entangled state.
Then,
 $$
 E_{R}(\rho)=S(\rho\|\sigma)\;\;\;\left(\text{i.e.}\;\sigma\;\text{is the CSS of}\;\rho\right)
 $$
 if and only if $\rho$ is of the form
 \begin{equation}\label{rhosingxfrm}
 \rho(x,\sigma)=(1-x)\sigma+xL_{\sigma}^{\ddag}\left(\psi\right)\;,\;\;0<x_{\max}\leq 1.
 \end{equation}
 Here $x_{\max}\le 1$ is the maximum value of $x$ not greater than 1 such that $\rho(x,\sigma)\in\rH_{n,+,1}$.
 The supporting hyperplane of $\mathcal{D}$ at $\sigma$ is represented by $\psi\ne P_{\sigma}$, that satisfies the conditions in
 Eq.(\ref{propsi}) and is zero outside the support of $\sigma$ (i.e. if $\sigma\x=\0$ then $\psi\x=\0$).
 \end{theorem}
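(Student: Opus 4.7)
My plan is to run the argument of Theorem~\ref{main} once more, but with Lemma~\ref{sinexp} replacing the analytic expansion in~(\ref{analytic}), and with the substitution $\psi=I-\phi$ so that the role played by the identity in the full-rank case is taken over by $P_{\sigma}=L_{\sigma}(\sigma)$. Two ambient facts will be used throughout. First, if $\sigma$ is a CSS of an entangled $\rho$ then $\ker\sigma\subset\ker\rho$, so $\rho$ is supported on $\supp\sigma$ and $P_{\sigma}\rho=\rho$. Second, because $L_{\sigma},L_{\sigma}^{\ddag}$ act by Hadamard multiplication with matrices supported on the $r\times r$ block, they preserve matrices supported on $\supp\sigma$, their outputs always vanish outside that block, and by~(\ref{lsigrel}) they are mutually inverse there; in particular $L_{\sigma}^{\ddag}(P_{\sigma})=\sigma$.

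\emph{Necessity.} Suppose $\sigma$ is the CSS of an entangled $\rho$. For $\sigma'\in\mathcal{D}$ that is strictly positive on $\ker\sigma$, the perturbation $\xi=\sigma'-\sigma$ satisfies the hypothesis of Lemma~\ref{sinexp}, giving
$$\tr\rho\log(\sigma+t\xi)=\tr\rho\log\sigma+t\tr\bigl(L_{\sigma}(\rho)\xi\bigr)+O(t^2|\log t|).$$
Minimality of $S(\rho\|\cdot)$ at $\sigma$ forces the first-order term to be non-positive, so $\tr L_{\sigma}(\rho)\sigma'\le\tr L_{\sigma}(\rho)\sigma=\tr P_{\sigma}\rho=1$. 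Extending by continuity from this dense family to all $\sigma'\in\mathcal{D}$, the Hermitian matrix $\psi:=L_{\sigma}(\rho)$ satisfies~(\ref{propsi}) and, by the Hadamard description of $L_{\sigma}$, vanishes outside $\supp\sigma$. Applying $L_{\sigma}^{\ddag}$ yields $L_{\sigma}^{\ddag}(\psi)=L_{\sigma}^{\ddag}L_{\sigma}(\rho)=P_{\sigma}\rho=\rho$, which is exactly~(\ref{rhosingxfrm}) with $x=1$. Entanglement of $\rho$ forces $\rho\ne\sigma$, hence $\psi\ne P_{\sigma}$.

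\emph{Sufficiency.} Given $\rho=(1-x)\sigma+xL_{\sigma}^{\ddag}(\psi)$ with $\psi$ as in the statement, both summands are supported in $\supp\sigma$, the constraint $x\le x_{\max}$ enforces $\rho\in\rH_{n,+,1}$, and $\psi\ne P_{\sigma}$ rules out $\rho=\sigma$; note also that $\tr\rho=(1-x)+x\tr\psi\sigma=1$ by~(\ref{propsi}). Fix $\sigma'\in\mathcal{D}$ and define $f(t)=\tr\rho\log((1-t)\sigma+t\sigma')$ on $[0,1]$. Joint convexity of the relative entropy makes $f$ concave, so it is enough to verify $f'(0^+)\le 0$. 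First using Lemma~\ref{sinexp} for $\sigma'$ strictly positive on $\ker\sigma$ (and then extending by continuity), and using that $L_{\sigma}(\rho)=(1-x)P_{\sigma}+x\psi$, the key computation is
$$f'(0^+)=\tr L_{\sigma}(\rho)(\sigma'-\sigma)=(1-x)\tr P_{\sigma}(\sigma'-\sigma)+x\tr\psi(\sigma'-\sigma).$$
The first summand is $\le 0$ since $\tr P_{\sigma}\sigma'\le\tr\sigma'=1=\tr P_{\sigma}\sigma$, and the second is $\le 0$ by~(\ref{propsi}). Concavity of $f$ then gives $f(1)\le f(0)$, i.e.\ $\tr\rho\log\sigma'\le\tr\rho\log\sigma$, so $S(\rho\|\sigma')\ge S(\rho\|\sigma)$ and $\sigma$ is the CSS.

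\emph{Main obstacle.} The genuinely delicate step is the clean use of Lemma~\ref{sinexp}, whose hypothesis demands that $\xi=\sigma'-\sigma$ be \emph{strictly} positive on $\ker\sigma$; an arbitrary $\sigma'\in\mathcal{D}$ need not satisfy this, so the first-order inequality has to be obtained first on a dense subset (e.g.\ perturbing $\sigma'$ to $(1-\varepsilon)\sigma'+\varepsilon\tau$ with $\tau\in\mathcal{D}$ full rank and letting $\varepsilon\to 0^+$) and then extended by continuity of $\tr L_{\sigma}(\rho)\sigma'$ in $\sigma'$. A secondary subtlety is that the necessity direction produces only the endpoint $x=1$ in the parametrization $(0,x_{\max}]$; the interior values simply correspond to sliding $\rho$ along the segment to $\sigma$ in $\rH_{n,+,1}$, which still has $\sigma$ as CSS by the same formula.
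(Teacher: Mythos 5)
Your proposal is correct and follows essentially the same route as the paper's own proof: necessity via the first-order optimality condition obtained from Lemma~\ref{sinexp} (first for $\sigma'$ full rank, then by continuity), yielding $\psi=L_{\sigma}(\rho)$ and the $x=1$ case of~(\ref{rhosingxfrm}), with the interior values of $x$ recovered by replacing $\psi$ with $t\psi+(1-t)P_{\sigma}$; and sufficiency via concavity of $f(t)=\tr\rho\log((1-t)\sigma+t\sigma')$ together with the computation $f'(0^+)=(1-x)\tr P_{\sigma}(\sigma'-\sigma)+x\tr\psi(\sigma'-\sigma)\le 0$. The subtleties you flag (strict positivity of $\xi$ on $\ker\sigma$ handled by a density/continuity argument, and the necessity direction producing only the endpoint $x=1$) are exactly the points the paper addresses, and your resolutions match.
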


 \begin{proof}
 Suppose first that $\rho=\rho(x,\sigma)$. We want to prove that $\sigma$ is the CSS of $\rho$. First, observe that
$$
 \tr(L_{\sigma}^{\ddag}(\psi))=\tr(L_{\sigma}^{\ddag}(\psi)P_{\sigma})=\tr(\psi L_{\sigma}^{\ddag}(P_{\sigma}))=\tr(\psi\sigma)=1.
$$
 Hence $\rho(x,\sigma)\in\rH_{n,+,1}$ for $x\in[0,x_{\max}]$, where we assume that $x_{\max}\le 1$.
 It is left to show that
 $\tr (\rho(x,\sigma)\log\sigma')\le \tr (\rho(x,\sigma)\log\sigma)$ for any $\sigma'\in\mathcal{D}$.
 From the continuity argument, it is enough to show this inequality for all $\sigma'>0$.
 Let $\sigma(t)=(1-t)\sigma+\sigma't$.
 Let $f(t)=\tr(\rho(x,\sigma)\log\sigma(t))$.
 Using the equality (\ref{varfor}), similar to Eq.~(\ref{compf'0}), we get for $\rho=\rho(x,\sigma)$:
\begin{align}
 f'(0) & =\tr(\rho L_{\sigma}(\sigma'-\sigma))=\tr\left[(\sigma'-\sigma)L_{\sigma}(\rho)\right]\nonumber\\
 &=\tr\left[\big((1-x)P_{\sigma}+x\psi\big)(\sigma'-\sigma)\right]=\nonumber\\
 &=(1-x)\tr (\sigma' P_{\sigma})+x\tr (\sigma'\psi)-1\leq 0\;,
\end{align}
 where we have used that
 $L_{\sigma}(\rho)=(1-x)P_{\sigma}+x\psi$ and $x\in(0,1]$.  (Note that $\tr \sigma'P_{\sigma}\le \tr \sigma'=1$.)  Hence $f'(0)\le 0$, and since $f(t)$ is concave
 we have $f(0)\geq f(1)$,
 which implies that $\tr\left[\rho(x,\sigma)\log\sigma\right]\ge \tr\left[\rho(x,\sigma)\log\sigma'\right]$.
 This completes the second direction of the theorem.
 Moreover, note that
$E_{R}(\rho)=\tr(\rho\log\rho)-\tr(\rho\log\sigma)>0$, since $\rho\ne \sigma$.  Hence $\rho$ is
 entangled.

 Assume now that $\sigma$ is a singular CSS of an entangled state $\rho'$.
 Without a loss of generality we may assume that $\sigma$ and $\rho'$ of the form (\ref{singdiag})
 and (\ref{bldiagf}).
 Hence, from Proposition~\ref{ness}, when applied to the singular case
 (see the discussion above), we get
 $$
 \tr \sigma' L_{\sigma}(\rho')\le \tr\sigma L_{\sigma}(\rho')=1
 $$
 for all $\sigma'\in\mathcal{D}$.
 Denote $\psi'\equiv L_{\sigma}(\rho')$.  Note that $\psi'\neq P_{\sigma}$ since $\rho'\ne\sigma$.
 Hence, with this notations the equation above reads $\tr\psi'\sigma'\leq\tr\psi'\sigma=1$. Moreover, by definition $\psi'$ is zero outside the support of $\sigma$.   Then $\rho'=L_{\sigma}^{\ddagger}(L_{\sigma}(\rho'))=L_{\sigma}^{\ddagger}(\psi)$.  Hence Eq.~(\ref{rhosingxfrm})
 holds for $\rho=\rho', \psi=\psi'$ and $x=1$.
 Note that if we define $\psi$ by
 $$
 \psi=t\psi'+(1-t)P_{\sigma}
 $$
 for some $t\in(0,1]$, then $\psi$ also satisfies the requirements of the theorem. By taking $L_{\sigma}^{\ddag}$ on both sides of the equation above, we get $\rho=t\rho'+(1-t)\sigma$. This implies Eq.~(\ref{rhosingxfrm}) $x=t$.  Note that from the first part of the proof we indeed conclude that
 $\sigma$ is the CSS to $\rho$
 Hence  Eq.~(\ref{rhosingxfrm}) holds for any $x\in (0,1]$.
 This completes the proof of the theorem.
\end{proof}

 Recall that Theorem \ref{main} claimed that for $\rho>0$ the corresponding CSS is unique.
 This is no longer true if $\rho$ semi-positive definite \cite{MI}.  The reason for that is quite simple.
 \begin{theorem}\label{finitcond}
 Let $\rho\in \rH_{n,+,1}$ and assume that $\rho$ is singular.  Denote $f_{\rho}(\xi)=\tr \rho \log\xi$ for $\xi\in\rH_{n,+}$.
 Then $f_{\rho}(\xi)>-\infty$ for $\xi\in\rH_{n,+}$ if and only if one of the following condition holds.
 \begin{enumerate}
 \item\label{finitcond1}
 $\xi>0$.
 \item\label{finitcond2}  $\rho\x=\0$ if $\xi\x=\0$.
 Equivalently,  assume that $\rho\in\rH_{n,+}$ is in the block diagonal form (\ref{bldiagf}) where $\rho_{11}>0$.
 There exists a unitary matrix $U$ of order $n-r$ such that
 $\xi=\diag(I_r,U)^\dagger \diag(\xi_2,0)\diag(I_r,U)$, where $0<\xi_2\in\rH_{p,+}$ and $p\in[r,n-1]$.
 \end{enumerate}
 Denote by $\rH_{n,+}(\rho)$ the set of all $\xi\in\rH_{n,+}$ such that $f_{\rho}(\xi)>-\infty$.
 Then $\rH_{n,+}(\rho)$ is a convex set.  The function $f_{\rho}:\rH_{k,+}(\rho)\to \R$ is concave but
 not strictly concave.
 \end{theorem}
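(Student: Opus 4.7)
The plan is to reduce the whole statement to an eigenvalue-level analysis via the regularization $\xi_\varepsilon := \xi + \varepsilon I_n$, on which $\log$ is well defined. Setting $f_\rho(\xi) := \lim_{\varepsilon\to 0^+}\tr(\rho\log\xi_\varepsilon)\in[-\infty,+\infty)$, the question is when this limit is finite. Writing the spectral decomposition $\xi = \sum_{j=1}^n \mu_j |v_j\rangle\langle v_j|$ with $\mu_j \ge 0$, I would expand
\[
\tr(\rho\log\xi_\varepsilon) = \sum_{\mu_j > 0}\log(\mu_j+\varepsilon)\,\langle v_j|\rho|v_j\rangle + \log\varepsilon\sum_{\mu_j=0}\langle v_j|\rho|v_j\rangle.
\]
The first sum has a finite limit, while the $\log\varepsilon$ prefactor forces divergence to $-\infty$ unless $\langle v_j|\rho|v_j\rangle = 0$ for every $v_j\in\ker\xi$. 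Since $\rho\ge 0$, this vanishing is equivalent to $\rho v_j = 0$, i.e.\ to $\ker\xi \subseteq \ker\rho$, which gives exactly the dichotomy \ref{finitcond1}--\ref{finitcond2}.

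For the explicit normal form in \ref{finitcond2} I place $\rho$ in the block-diagonal form (\ref{bldiagf}), so that $\ker\rho = \mathrm{span}\{\e_{r+1},\ldots,\e_n\}$. Then $\ker\xi\subseteq\mathrm{span}\{\e_{r+1},\ldots,\e_n\}$ forces $p := \rank\xi \in [r,n-1]$ in the singular case. I would pick an orthonormal basis of $\ker\xi$ inside $\mathrm{span}\{\e_{r+1},\ldots,\e_n\}$, extend it to an orthonormal basis of that $(n-r)$-dimensional subspace, and let $U$ be the unitary on $\C^{n-r}$ sending the last $n-p$ standard basis vectors onto the chosen basis of $\ker\xi$. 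Then $W := \diag(I_r,U)$ satisfies $W^\dagger\xi W = \diag(\xi_2,0)$ with $\xi_2\in\rH_{p,+}$ having trivial kernel, hence $\xi_2>0$, which is the claimed representation.

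The remaining geometric claims follow from standard inputs. Convexity of $\rH_{n,+}(\rho)$: for PSD $\xi_1,\xi_2$ and $t\in(0,1)$ one has $\ker((1-t)\xi_1+t\xi_2)=\ker\xi_1\cap\ker\xi_2\subseteq\ker\rho$. Concavity of $f_\rho$: invoke the classical operator concavity of the matrix logarithm on the strictly positive cone, $\log((1-t)\xi_1+t\xi_2)\ge(1-t)\log\xi_1+t\log\xi_2$ in the L\"owner order, and take $\tr\rho(\cdot)$, which preserves the inequality since $\rho\ge 0$; extend to the singular boundary by replacing $\xi_i$ with $\xi_i+\varepsilon I$ and letting $\varepsilon\downarrow 0$. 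Failure of strict concavity is exhibited directly: with $\rho=\diag(\rho_{11},0)$ and $\xi_i=\diag(A,C_i)$ sharing a common $A>0$ but distinct $C_i\ge 0$, the block-diagonal structure is preserved under convex combination and $f_\rho(\diag(A,C))=\tr(\rho_{11}\log A)$ is independent of $C$, so the concavity inequality holds with equality.

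The main technical subtlety is the singular-boundary passage in the concavity proof: operator concavity of $\log$ is classical only for strictly positive arguments, and one must verify that the inequality survives as $\varepsilon\downarrow 0$. This is secured by operator monotonicity of $\log$, which makes $\tr\rho\log(\xi+\varepsilon I)$ monotone in $\varepsilon$, together with the convexity of $\rH_{n,+}(\rho)$ which guarantees that the limit on the left-hand side is finite, so both sides of the finite-$\varepsilon$ inequality converge and the concavity passes cleanly to the limit.
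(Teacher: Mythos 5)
Your proposal is correct and follows essentially the same route as the paper: a spectral expansion of $\tr(\rho\log\xi)$ to reduce finiteness to $\langle v|\rho|v\rangle=0$ on $\ker\xi$ (equivalent to $\ker\xi\subseteq\ker\rho$ since $\rho\ge 0$), operator concavity of $\log$ on strictly positive matrices combined with the $\xi+\varepsilon I$ regularization and a monotone limit $\varepsilon\downarrow 0$ for concavity, and the same block-diagonal family $\diag(A,C_i)$ with fixed $A$ to defeat strict concavity. The only departure is that you obtain convexity of $\rH_{n,+}(\rho)$ directly from $\ker((1-t)\xi_1+t\xi_2)=\ker\xi_1\cap\ker\xi_2$ rather than from the limiting concavity inequality as the paper does, and you spell out the unitary $U$ in the normal form of condition~\ref{finitcond2}, which the paper leaves as ``straightforward''; both are harmless refinements.
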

 \begin{proof}  Clearly, if $\xi>0$ then $f_{\rho}(\xi)>-\infty$.
 More precisely, Assume that $\x_1,\ldots,\x_n$ is an orthonormal system of eigenvectors of $\xi$ with the corresponding eigenvalues $x_1\ge\ldots
 \ge x_n>0$.  Then $f_{\rho}(\xi)=\sum_{i=1}^n (\log x_i) \x_i^{\dagger} \rho \x_i$.  Using the continuity argument we deduce that this formula
 remains valid for $\xi$ semipositive definite.  Hence $f_{\rho}(\xi)>-\infty$ if and only if $\x^{\dagger} \rho\x=0$ for each eigenvector $\x$ in the null  space of $\xi$.  Since $\rho\in\rH_{n,+}$ it follows that
 $\x^{\dagger}\rho\x=0\iff \rho\x=\0$.  This proves the first part of \ref{finitcond2}.  The second part of \ref{finitcond2} follows straightforward from this condition.

 We now show that $\rH_{n.+}(\rho)$ is a convex set. Let $\zeta\in \rH_{n,+}$ and $s>0$.  Then $g(s)=f_{\rho}(\zeta+sI_k)$ is a strictly increasing function
 on $(0,\infty)$.  (Choose an eigenbase of $\zeta$.)
 Assume that $\xi,\eta \in\rH_{n,+}(\rho)$.   Then for any $s>0,t\in (0,1)$, the concavity of $\log \zeta$ on $\zeta>0$ yields that
 $\log (t(\xi+sI_k)+(1-t)(\eta+sI_k))\ge t\log (\xi+sI_k)+(1-t)\log (\eta+sI_k),$
 which implies
 $f_{\rho} (t(\xi+sI_k)+(1-t)(\eta+sI_k))\ge tf_{\rho} (\xi+sI_k)+(1-t)f_{\rho} (\eta+sI_k)$.
 Letting $s\searrow 0$ and using the assumption that $f_{\rho} (\xi), f_{\rho}(\eta)>-\infty$ we deduce that $f_{\rho} (t\xi+(1-t)\eta)>-\infty$.
 Hence $\rH_{n,+}(\rho)$ is convex.  The above arguments show also that $f_{\rho}$ is a concave function on $\rH_{n.+}(\rho)$.

 It is left to show that $f_{\rho}$ is not strictly concave on $\rH_{n.+}(\rho)$.  Let $\xi=\diag(\beta,\xi_2), \eta=\diag(\beta,\eta_2)$, where $0<\beta\in \rH_{r,+}, \xi_2,\eta_2\in\rH_{n-r,+}$
 and $\xi_2\ne \eta_2$.  Clearly, $f_{\rho}(t\xi+(1-t)\eta)=\tr \rho_{11} \log \beta$ for all $t\in [0,1]$.
 \end{proof}
 \begin{corollary}   Let $\rho\in \rH_{n,+,1}$ and assume that $\rho$ is singular.   Then the set of CSS to $\rho$ is a compact convex
 on the boundary of $\cD$, which may contain more then one point.
 \end{corollary}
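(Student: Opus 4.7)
The plan is to reformulate the definition of CSS as a maximization problem and then read off each property from the previous results in the paper. Since $\tr(\rho\log\rho)$ is a fixed constant (finite, because $0\log 0=0$), minimizing $S(\rho\|\sigma)$ over $\sigma\in\mathcal{D}$ is equivalent to maximizing $f_{\rho}(\sigma)=\tr(\rho\log\sigma)$ over $\sigma\in\mathcal{D}$. By Theorem~\ref{finitcond}, this maximization is effectively carried out on $\mathcal{D}\cap \rH_{n,+}(\rho)$, since $f_{\rho}(\sigma)=-\infty$ outside this set, and it is nonempty because e.g.\ the maximally mixed state $I_n/n$ lies in it. Existence of a maximizer follows by a standard compactness / upper-semicontinuity argument: $\mathcal{D}$ is compact, and $f_{\rho}$ is upper-semicontinuous as a map $\mathcal{D}\to [-\infty,\infty)$.

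Convexity of the CSS set is immediate from Theorem~\ref{finitcond}: the set $\rH_{n,+}(\rho)$ is convex and $f_{\rho}$ is concave on it, so the intersection of $\mathcal{D}$ with $\rH_{n,+}(\rho)$ is convex and the set of points in this intersection attaining the maximum of the concave function $f_{\rho}$ is itself a convex set. For closedness, note that on the compact set $\mathcal{D}$ the function $f_{\rho}$ is continuous at every point where it is finite and tends to $-\infty$ at boundary points of $\rH_{n,+}(\rho)$; hence $\{\sigma\in\mathcal{D}:f_{\rho}(\sigma)\ge c\}$ is closed for every real $c$, so the level set at the maximum value is closed and therefore compact. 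The fact that every CSS lies in $\partial\mathcal{D}$ has already been established by the corollary following Theorem~\ref{ness}, which applies to any entangled $\rho$ (the argument there did not use full-rankness of $\rho$; see also the singular extension discussed in Section~\ref{singular}).

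For the ``may contain more than one point'' clause, I would simply invoke the failure of strict concavity from the last paragraph of the proof of Theorem~\ref{finitcond}: if $\sigma$ is singular and lies in the relative interior of a flat face of $\mathcal{D}\cap\rH_{n,+}(\rho)$ along directions $\xi_2\mapsto \diag(\beta,\xi_2)$, then every $\sigma'$ in that face gives the same value of $f_{\rho}$, so if one is a CSS, all are. A concrete realization is already provided by the two-qubit examples in \cite{MI}, where several distinct CSS for the same rank-deficient $\rho$ are exhibited; these examples transfer directly, completing the claim. The main obstacle I anticipate is purely technical: one must make sure that the upper-semicontinuity and closedness arguments correctly handle the $-\infty$ values of $f_{\rho}$ on $\mathcal{D}\setminus \rH_{n,+}(\rho)$, but once the problem is restated in terms of maximizing $f_{\rho}$ this is routine.
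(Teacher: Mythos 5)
Your proposal is correct and follows essentially the route the paper intends: the corollary is stated there without proof as an immediate consequence of Theorem~\ref{finitcond}, and your reformulation of the CSS set as the set of maximizers of the concave function $f_{\rho}$ on the convex set $\mathcal{D}\cap\rH_{n,+}(\rho)$, together with the failure of strict concavity for the non-uniqueness clause, is exactly the intended argument. One small caution: your claim that $f_{\rho}$ is \emph{continuous} wherever it is finite is false (e.g.\ for $\rho=\diag(1,0)$, states with eigenvalues $1-e^{-k},e^{-k}$ and eigenvectors rotated by an angle $\theta_k$ with $\sin^2\theta_k=1/k$ converge to $\diag(1,0)$ while $f_{\rho}$ converges to $-1\neq 0$), but upper semicontinuity --- which you also invoke and which does hold, since $S(\rho\|\cdot)$ is lower semicontinuous --- is all that the closedness of the maximizing level set requires.
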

$\;$
\section{conclusions}

 To conclude, given a state $\sigma$ on the boundary of separable or PPT states, we have found a closed formula for \emph{all} entangled states for which $\sigma$ is a CSS. We have also shown that if $\sigma$ is full rank, than it is unique. Quite remarkably, our formula holds in all dimensions and for any number of parties. As an illustrating example, we have analyzed the case of qubit-qudit systems and described how to apply the formula for this case.

\emph{Acknowledgments:---}
GG research is supported by NSERC.

\pagebreak

\begin{appendix}

\section{The strong concavity of $\log A$}\label{strong}
 \begin{definition}\label{concaop} For an interval $\inter\subset \R$ let $\rH_n(\inter)$ be the set of all $n\times n$
 hermitian matrices whose eigenvalues are in $\inter$.  (Here $\inter$ can be open, closed, half open, half infinite or infinite.)
 Let $f:\rH_n(\inter)\to \rH_{n}$ be a continuous function.  $f$ is called \emph{monotone}, \emph{strict monotone} and
 \emph{strong monotone}  if for any  $C,A\in \rH_n(\inter)$ the corresponding conditions hold respectively:
 $C\ge A \Rightarrow f(C)\ge f(A)$,  $C> A\Rightarrow f(C)> f(A)$,
 $C\gneq A\Rightarrow f(C)\gneq f(A)$.
 $f$ is called \emph{concave},  \emph{strict concave} and  \emph{strong concave}
 if for any  $C,A\in \rH_n(\inter)$ the corresponding conditions hold respectively:
 $f((1-s)A+sB)\ge (1-s)f(A)+sf(B)$, $f((1-s)A+sB)> (1-s)f(A)+sf(B)$ if $\rank (A-B)=n$ and $s\in (0,1)$, $f((1-s)A+sB)\gneq (1-s)f(A)+sf(B)$
 if $A\ne B$ and $s\in (0,1)$.
 \end{definition}

 A well known result is that the functions $f_t(A):=A^t, t\in (0,1)$ and $\log A$ are strictly concave and strictly monotone on $H_{n}((0,\infty))$.  See \cite[\S6.6]{HJ99}.  In this section we show that $\log A$ is strongly concave on  $H_{n}((0,\infty))$.  This implies that $\tr \rho\log\sigma$
 is strictly concave for a fixed $\rho>0$ and all $\sigma>0$.  Hence the CSS $\sigma$ to an entangled $\rho>0$  is strictly positive and unique.
 We need also to consider $\x^{\dagger}(\log A)\x$, where $\x$ is a nonzero column vector in $\C^n$, and $A$ is singular and positive.  Then it makes sense only to consider only those $\x\in\U_+(A)\subset\C^n$, where
 $\U_+(A)$ is  the subspace spanned by eigenvectors of $A$ corresponding to positive eigenvalues.  For $\x\in\U_+(A)$ we have that $\x^\dagger(\log A) \x>-\infty$.  We also agree that for each $\x\in\C^n\backslash \U_+(A)$
 $\x^\dagger(\log A) \x=-\infty$.  Then for each $\x\in\C^n$, the function $\x^\dagger(\log A)\x$ is concave on $\rH_{n,+}$.  we agree here that $$t(-\infty)=-\infty=-\infty-\infty=-\infty+\R=\R-\infty \textrm{ for any } t>0.$$
 \begin{theorem}\label{strongconcav}  Let $A,B\in \rH_{n,+}$.  Then
 \begin{equation}\label{RAB}
 R(A,B):=A+B-(A^{\frac{1}{2}}B^{\frac{1}{2}}+B^{\frac{1}{2}}A^{\frac{1}{2}})\ge 0.
 \end{equation}
 Furthermore one has the identity
 \begin{equation}\label{basid}
 ((1-s)A^{\frac{1}{2}}+sB^{\frac{1}{2}})^2=(1-s)A+sB-(1-s)sR(A,B)).
 \end{equation}
 Hence for $t\in (0,\frac{1}{2}]$
 \begin{align}
 & (1-s)A^{t}+sB^{t} \le ((1-s)A+sB\nonumber\\
 & -(1-s)s R(A,B))^{t}
 \le((1-s)A+sB)^{t},\label{strongtconc}\\
 & (1-s)\log A+s\log B  \le \log\Big[(1-s)A+sB\nonumber\\
 &-(1-s)s R(A,B)\Big]\le \log ((1-s)A+sB).
 \label{stronglogconcav1}
 \end{align}
 \end{theorem}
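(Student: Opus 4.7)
The whole theorem reduces to a single algebraic identity together with two classical operator-theoretic facts: Loewner--Heinz operator monotonicity and concavity of $x\mapsto x^t$ on $[0,\infty)$ for $t\in[0,1]$, and the analogous operator monotonicity and concavity of $\log x$ on $(0,\infty)$ (see \cite[\S6.6]{HJ99}).

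The key observation is that $R(A,B)$ is a perfect square. I would expand
\[
(A^{\frac{1}{2}}-B^{\frac{1}{2}})^2 \;=\; A + B - A^{\frac{1}{2}}B^{\frac{1}{2}} - B^{\frac{1}{2}}A^{\frac{1}{2}} \;=\; R(A,B),
\]
which, being the square of a hermitian matrix, lies in $\rH_{n,+}$; this proves (\ref{RAB}). The identity (\ref{basid}) is then a direct expansion: the left-hand side equals $(1-s)^2A + s(1-s)(A^{\frac{1}{2}}B^{\frac{1}{2}}+B^{\frac{1}{2}}A^{\frac{1}{2}}) + s^2B$, and substituting the definition of $R(A,B)$ into the right-hand side of (\ref{basid}) collapses to the same expression after grouping the $A$ and $B$ terms.

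For the chain (\ref{strongtconc}), set $u := (1-s)A^{\frac{1}{2}}+sB^{\frac{1}{2}}$ and $v := (1-s)A + sB$. By (\ref{basid}) and $R(A,B)\ge 0$, $u^2 = v - s(1-s)R(A,B) \le v$. Since $t\in(0,\tfrac12]$ forces $2t\in(0,1]$, operator monotonicity of $x\mapsto x^t$ gives $u^{2t}\le v^t$, the right-hand inequality of (\ref{strongtconc}). For the left-hand inequality, operator concavity of $x\mapsto x^{2t}$ applied at $A^{\frac{1}{2}}$ and $B^{\frac{1}{2}}$ yields
\[
u^{2t} \;=\; \bigl((1-s)A^{\frac{1}{2}} + sB^{\frac{1}{2}}\bigr)^{2t} \;\ge\; (1-s)(A^{\frac{1}{2}})^{2t} + s(B^{\frac{1}{2}})^{2t} \;=\; (1-s)A^t + sB^t.
\]

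The log chain (\ref{stronglogconcav1}) is proved by the same template with $\log$ in place of $x^t$. Operator monotonicity of $\log$ turns $u^2\le v$ into $\log u^2 \le \log v$; operator concavity of $\log$ applied at $A^{\frac{1}{2}}, B^{\frac{1}{2}}$ gives $\log u \ge (1-s)\log A^{\frac{1}{2}} + s\log B^{\frac{1}{2}} = \tfrac{1}{2}\bigl((1-s)\log A + s\log B\bigr)$, and doubling (using $\log u^2 = 2\log u$) supplies the left inequality. The singular case is handled via the conventions on $\U_+$ introduced just before the theorem together with a standard regularization $A\mapsto A+\varepsilon I$, $B\mapsto B+\varepsilon I$, $\varepsilon\searrow 0$, under which every inequality in sight is preserved in the limit. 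The only real ``obstacle'' is citing the correct Loewner--Heinz facts cleanly; once $R(A,B)=(A^{\frac{1}{2}}-B^{\frac{1}{2}})^2$ is recognized, the theorem is essentially automatic.
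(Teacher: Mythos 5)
Your proof is correct, and it diverges from the paper's argument in two worthwhile ways. First, for the positivity (\ref{RAB}) you observe that $R(A,B)=(A^{\frac{1}{2}}-B^{\frac{1}{2}})^2$ is the square of a hermitian matrix; the paper instead bounds the quadratic form $\x^\dagger R(A,B)\x$ from below by combining the Cauchy--Schwarz inequality $|\x^\dagger A^{\frac{1}{2}}B^{\frac{1}{2}}\x|\le((\x^\dagger A\x)(\x^\dagger B\x))^{\frac{1}{2}}$ with the arithmetic--geometric mean inequality. Your factorization is shorter and makes the equality case $R(A,B)=0\iff A=B$ (which the paper needs later, in Corollary~\ref{conclog}) immediate, since $X^2=0$ forces $X=0$ for hermitian $X$. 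Second, for (\ref{strongtconc}) and (\ref{stronglogconcav1}) the paper first proves the case $t=\frac{1}{2}$ using only monotonicity of the square root, extends to $t=2^{-m}$ by induction, obtains the log chain from the power chain, and only at the end invokes concavity and order preservation of $x\mapsto x^{2t}$ for general $t\in(0,\frac{1}{2})$; you invoke Loewner--Heinz operator monotonicity and operator concavity of $x\mapsto x^{t}$ and of $\log$ up front and obtain all cases, including the log chain via $\log(u^2)=2\log u$, in a single pass. Both routes rest on the same classical facts cited in \cite[\S6.6]{HJ99}, so nothing is gained in generality, but your organization eliminates the induction and the limiting argument. Your regularization $A\mapsto A+\varepsilon I$, $B\mapsto B+\varepsilon I$ for the singular case of the log inequality is consistent with the $-\infty$ conventions set up before the theorem and is, if anything, more explicit than the paper, which leaves that case implicit.
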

 \proof
 Let $A,B\in\rH_{n,+}$ and assume that $R(A,B)$ is defined by (\ref{RAB}).  We claim that $R(A,B)\ge 0$.
 This is a straightforward consequence of
 the Cauchy-Schwarz and the arithmetic-geometric inequalities
 \begin{eqnarray*}
 |\x^\dagger A^{\frac{1}{2}}B^{\frac{1}{2}}\x|\le ((\x^\dagger A\x)(\x^\dagger B\x))^{\frac{1}{2}}\le \frac{1}{2}(\x^\dagger A\x+\x^\dagger B\x),\\
 |\x^\dagger B^{\frac{1}{2}}A^{\frac{1}{2}}\x|\le ((\x^\dagger B\x)(\x^\dagger A\x))^{\frac{1}{2}}\le \frac{1}{2}(\x^\dagger B\x+\x^\dagger A\x).
 \end{eqnarray*}
 Furthermore, $R(A,B)=0$ if and only if $A=B$.
 Clearly $R(A,A)=0$.  Suppose that $R(A,B)=0$.
 Then the above arguments yield that we must have the equalities in the Cauchy-Schwarz
 inequalities, and equalities in the arithmetic-geometric mean for each $\x$.  So $A^{\frac{1}{2}}\x=B^{\frac{1}{2}}\x$ for each $\x$.
 Hence $A^{\frac{1}{2}}=B^{\frac{1}{2}}\Rightarrow A=B$.

 A straightforward calculation shows the validity of (\ref{basid}).  Hence
 \begin{align*}
 \left((1-s)A^{\frac{1}{2}}+sB^{\frac{1}{2}}\right)^2 & =(1-s)A+sB-(1-s)sR(A,B))\\
& \le (1-s)A+sB \textrm{ for } s\in (0,1).
 \end{align*}
 Since $A^{\frac{1}{2}}$ is monotone, we deduce from the above inequality the inequality
 (\ref{strongtconc}) for $t=\frac{1}{2}$.
 The inequality  (\ref{strongtconc}) for $t=\frac{1}{2^m}$ follows by induction.  Hence (\ref{stronglogconcav1}) follows
 from (\ref{strongtconc}) for $t=\frac{1}{2^m}$.

 Assume that $t\in (0,\frac{1}{2})$.  By assuming that $A^{2t}$ is concave and order preserving we deduce from the above inequality
 (\ref{strongtconc}) for $t\in(0,\frac{1}{2}]$.  \qed
 \begin{corollary}\label{conclog}  For each $\rho\in\rH_{n}((0,\infty))$ the function $\tr(\rho\log\sigma)$ is a strict concave function on
 $\rH_{n}((0,\infty))$.
 \end{corollary}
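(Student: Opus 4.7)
The plan is to deduce strict concavity of $\sigma \mapsto \tr(\rho \log \sigma)$ directly from the strong-concavity inequality (\ref{stronglogconcav1}) of Theorem~\ref{strongconcav}, using the fact that a full-rank $\rho$ detects every nonzero positive semidefinite direction as a strictly positive scalar.

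First, I would fix distinct $\sigma_1,\sigma_2 \in \rH_n((0,\infty))$ and $s\in(0,1)$, and set
\[
M := (1-s)\sigma_1 + s\sigma_2, \qquad N := M - (1-s)s\,R(\sigma_1,\sigma_2).
\]
By the identity (\ref{basid}), $N = \bigl((1-s)\sigma_1^{1/2} + s\sigma_2^{1/2}\bigr)^2$, which is strictly positive definite because $\sigma_1,\sigma_2 > 0$. The key input from Theorem~\ref{strongconcav} is that $R(\sigma_1,\sigma_2) \ge 0$ with equality only when $\sigma_1 = \sigma_2$ (this equality clause is established inside the proof of the theorem via the Cauchy--Schwarz equality condition). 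Since $\sigma_1 \ne \sigma_2$, it follows that $M - N = (1-s)s\,R(\sigma_1,\sigma_2) \gneq 0$, i.e.\ positive semidefinite and nonzero.

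Next, I would promote this gap to the logarithm. Weak monotonicity $\log M \ge \log N$ on positive definite matrices is classical (cited in the appendix); equality would give $M = N$ upon applying $\exp$, contradicting the previous step. Hence $\log M \gneq \log N$. Combining with the left half of (\ref{stronglogconcav1}), the hermitian matrix
\[
Y := \log M - (1-s)\log\sigma_1 - s\log\sigma_2
\]
satisfies $Y \ge \log M - \log N \gneq 0$, so $Y$ is PSD and nonzero.

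Finally, I would use that $\rho > 0$. Writing $\tr(\rho Y) = \tr(\rho^{1/2} Y \rho^{1/2})$, the matrix $\rho^{1/2} Y \rho^{1/2}$ is PSD, so its trace is nonnegative and vanishes only when the matrix itself vanishes; invertibility of $\rho^{1/2}$ then forces $Y = 0$, contradicting $Y \gneq 0$. Therefore $\tr(\rho Y) > 0$, which rearranges to
\[
\tr(\rho \log M) > (1-s)\tr(\rho \log\sigma_1) + s\tr(\rho \log\sigma_2),
\]
establishing strict concavity. The only substantive step is Theorem~\ref{strongconcav}, which is already in hand; the genuinely new observation is that the full-rank hypothesis on $\rho$ is precisely what converts the matrix inequality $Y \gneq 0$ into the scalar strict inequality, and this is the step where semipositive $\rho$ would fail (consistent with the non-uniqueness of the CSS in the singular case discussed in Section~\ref{singular}).
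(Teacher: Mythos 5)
Your proposal is correct and follows essentially the same route as the paper's own proof: both deduce $R(\sigma_1,\sigma_2)\gneq 0$ from $\sigma_1\ne\sigma_2$, upgrade this to the strict matrix inequality $(1-s)\log\sigma_1+s\log\sigma_2\lneq\log((1-s)\sigma_1+s\sigma_2)$ via operator monotonicity of $\log$ together with inequality (\ref{stronglogconcav1}), and then trace against the full-rank $\rho$ to get the strict scalar inequality. Your write-up merely makes explicit two steps the paper leaves implicit (the injectivity argument for why $\log M\ne\log N$, and the $\rho^{1/2}Y\rho^{1/2}$ argument for why $\tr(\rho Y)>0$), so no further comparison is needed.
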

 \begin{proof}  Let $\sigma,\eta\in \rH_{n}((0,\infty))$ and assume that $\sigma\ne\eta$.  Then $R(\eta,\sigma)\gneq 0$.
 Hence for $s\in (0,1)$
 \begin{align*}
 & \log((1-s)\sigma +s\eta-(1-s)s R(\sigma,\eta))\lneq \log((1-s)\sigma +s\eta)\\
 & \Rightarrow\;\;
 (1-s)\log\sigma+s\log\eta\lneq \log((1-s)\sigma +s\eta)\\
 & \Rightarrow
 \tr(\rho((1-s)\log\sigma+s\log\eta))< \tr(\rho\log((1-s)\sigma +s\eta)).
 \end{align*}
 \end{proof}
 Obviously the above corollary does not hold if $\rho\ge 0$ has at least one zero eigenvalue.
 If $\sigma,\eta>0$ has same eigenvalues and eigenvectors which span the range of $\rho$, i.e. $\sigma \U_+(\rho)=\eta\U_+(\rho)=\U_+(\rho),
 \sigma=\eta|\U_+(\rho)$,
 then $\tr(\rho((1-s)\log\sigma+s\log\eta))$ is constant for $s\in [0,1]$.

 \end{appendix}

\end{document}